\newcommand{\bea}{\begin{eqnarray}}
\newcommand{\eea}{\end{eqnarray}}
\newcommand{\be}{\begin{equation}}
\newcommand{\ee}{\end{equation}}
\newtheorem{theorem}{Theorem}[section]
\newtheorem{proposition}[theorem]{Proposition}
\newtheorem{lemma}[theorem]{Lemma}
\newtheorem{corollary}[theorem]{Corollary}
\theoremstyle{definition}
\newtheorem{example}[theorem]{Example}
\newtheorem{remark}[theorem]{Remark}
\renewenvironment{proof}{{\noindent\bf Proof.}}{\hfill $\Box$\par\vskip3mm}
\begin{document}

\begin{flushright}
\begin{tabular}{l}
CALT-68-2869 \\

\\ [.3in]
\end{tabular}
\end{flushright}

\title[Topological recursion for chord diagrams]{Topological recursion for chord diagrams,
RNA complexes, and cells in moduli spaces}
\author [J.\ E.\ Andersen, L.\ O.\ Chekhov, R.\ C.\ Penner, C.\ M.\ Reidys, P.\ Su{\l}kowski]
{J{\o}rgen E. Andersen$^{1}$, Leonid O.\ Chekhov$^{3}$,  R.\ C.\ Penner$^{1,2,\star}$,
Christian M. Reidys$^{4}$, Piotr Su{\l}kowski$^{2,5,6}$}

\maketitle

\begin{center}
$^1$ Center for Quantum Geometry of Moduli Spaces, Aarhus University, DK-8000 {\AA}rhus C, Denmark \\
$^2$ Division of Physics, Mathematics and Astronomy, \\ California Institute of Technology, Pasadena, California, 91125 USA \\
$^3$ Department of Theoretical Physics, Steklov Mathematical Institute, Moscow, 119991 Russia \\
$^4$ Department of Mathematics and Computer Science, University of Southern Denmark, DK-5230 Odense M, Denmark \\
$^5$ Institute for Theoretical Physics, University of Amsterdam, \\ Science Park 904, 1090 GL, Amsterdam, The Netherlands \\
$^6$ Faculty of Physics, University of Warsaw, \\ul. Ho{\.z}a 69, 00-681 Warsaw, Poland \\
$^{\star}$ Corresponding author: rpenner@caltech.edu
\end{center}

\bigskip

\begin{abstract} We introduce and study the Hermitian matrix model with potential $V_{s,t}(x)=x^2/2-stx/(1-tx)$, which enumerates
the number of linear chord diagrams of fixed genus with specified numbers of backbones generated by $s$ and chords generated by $t$.
For the one-cut solution, the partition function, correlators and free energies are convergent for small $t$ and all $s$ as
a perturbation of the Gaussian potential, which arises for $st=0$.  This perturbation is computed using the formalism of the topological recursion.  The corresponding enumeration of chord diagrams gives at once the number of RNA
complexes of a given topology as well as the number of cells in Riemann's moduli spaces for bordered surfaces.
The free energies are computed here in principle for all genera and explicitly for genera less than four.
\end{abstract}

\bigskip

\noindent\keywords{Keywords: chord diagrams, matrix model, Riemann's moduli space, RNA, topological recursion}

\eject

\section*{Introduction}
\label{sec:intro}
Consider a collection of $b\geq 1$ pairwise disjoint, oriented and labeled intervals lying in the real line $\mathbb R\subset \mathbb C$, each component of which is called a {\it backbone}.  A {\it chord diagram} $C$ on these backbones is comprised of a collection of $n\geq 0$ semi-circles called {\it chords} lying in the upper half plane whose endpoints lie at distinct interior points of the backbones {\sl so that the resulting diagram is connected}.  

This description of $C$ with its chords in the upper half plane determines a corresponding {\it fatgraph}, namely, a graph in the usual sense of the term together with cyclic orderings on each collection of half-edges incident on a common vertex.  The fatgraph $C$, in turn, determines an associated {\it skinny surface} $\Sigma(C)$ with boundary by replacing each backbone with a 
rectangle to which are attached further semi-circular rectangles respecting orientation, one for each chord of $C$.  In particular, $\Sigma(C)$ contains
$C$ as a  deformation retract; see Figure \ref{fig:surface} for an example, where the chosen under/overcrossings of the rectangles are immaterial to the fatgraph structure.

\begin{figure}[h]
\begin{center}
\includegraphics[width=0.95\textwidth]{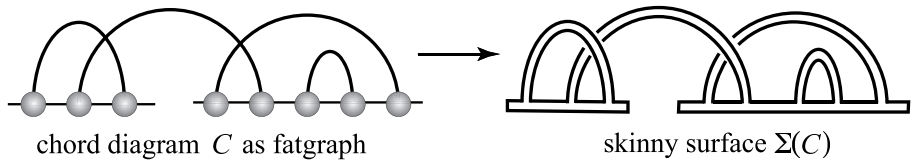}
\end{center}
\caption{From chord diagram to skinny surface.}
\label{fig:surface}
\end{figure}

A chord diagram $C$ therefore naturally determines an oriented surface $\Sigma(C)$, which is connected since $C$ 
is assumed to be connected.  Thus, $\Sigma(C)$ is characterized up to homeomorphism
by its genus $g\geq 0$ and number $r\geq 1$ of boundary components, which we may equivalently associate
to $C$ itself, and the Euler characteristic of $C$ or $\Sigma(C)$ is given by $b-n=2-2g-r$.
This last relation can be used in particular to find the number of boundary components of the surface $\Sigma(C)$
by inspection and then infer the genus. For example in Figure \ref{fig:surface}, we have $b=2$ backbones,
$n=4$ chords and $r=4$ boundary components, which implies that the corresponding surface has genus $g=0$. 

Let $c_{g,b}(n)$ denote the number of isomorphism classes of chord diagrams of genus $g$ with $n$ chords on $b$ labeled backbones whose generating functions 
\be\label{C-intro}
C_{g,b}(z)=\sum _{n\geq 0}c_{g,b}(n)~z^n, ~{\rm for}~g\geq 0,
\ee
 are of our central interest here.  We shall recursively calculate them
using the topological recursion  \cite{ChekhovEynard05,ChEO06,EO07} of a Hermitian one-matrix model 
\be
Z=\int DH~e^{-N{\rm tr}V(H)} = {\rm exp}~{\sum_{g=0}^{\infty} N^{2-2g} F_g},   \label{Zintro}
\ee
where $N$ denotes size of matrices, for a particular potential 
\be
V(x)={{~x^2}\over 2}-{{stx}\over{1-tx}}    \label{Vmatrix-intro}
\ee
generalizing the Gaussian for $st=0$.
The free energy in genus $g$, which we compute, is given by
\be
\label{eq:freeg}
F_g(s,t)= const + \sum_{b\geq 1} {{s^b}\over{b!}}~C_{g,b}(t^2),~{\rm for}~g\geq 0,
\ee
where the constant terms reproduce the Gaussian free energies given by $ \frac{B_{2g}}{2g(2g-2)}$ according to
\cite{abmodel} in each genus $g\geq 2$  with
appropriate modifications for $g=0$ or $1$, and where $B_{2g}$ denote Bernoulli numbers.  
The extra factor $b!$ arises because $C_{g,b}(n)$ counts chord diagrams with labeled backbones as opposed to
unlabeled in the topological recursion, where a permutation of backbones must nevertheless preserve backbone orientations
as we will see.

As has long been known, cf.\  \cite{BIZ,Penner88, Kontsevich92}, Hermitian matrix models are well suited to various computations of combinatorial fatgraph generating functions which are determined by corresponding potentials.
In section \ref{sec:mm}, we derive the specific matrix model potential $V(x)=stx/(1-tx)$ which encodes the solution to our particular combinatorial problem.
This is a two-parameter perturbation of the Gaussian, and the parameters $s$ and $t$ in the expansion of the partition function $Z_{s,t}$ are essentially generating parameters for the numbers of backbones and chords respectively. The entire solution of the model is encoded in the free energies $F_g$ in (\ref{Zintro}), which arise in the large $N$ expansion of the matrix integral.  Provided the 't Hooft parameter $T$ is non-zero (and we shall ultimately take it to be $T=1$ to restrict to the model of the form (\ref{Zintro})), we show there is a unique continuous one-cut extension of the Gaussian which converges for small $t$ and all $s$.  It is this solution that we compute here and show reproduces more elementary enumerative results.

Let us explain in more detail what the topological recursion is and how we can use it to determine the solution of the above matrix model. For a general Hermitian matrix model in its
formal large $N$ expansion, one can write down so-called loop equations, which are Ward identities or Schwinger-Dyson equations for certain multi-linear correlators $W^{(g)}_n(p_1,\ldots,p_n)$ generalizing the resolvent. The leading order equation among those identities specifies a so-called spectral curve, i.e., an algebraic curve on which the resolvent is well defined. It also turns out that all correlators $W^{(g)}_n(p_1,\ldots,p_n)$ and loop equations they satisfy can be encoded entirely in terms of this spectral curve. These loop equations can be solved in a recursive way \cite{ChekhovEynard05,ChEO06,EO07}, and in this manner, free energies $F_g$ (for $g\geq 2$) are completely determined by correlators $W^{(g)}_1(p)$. 

We stress that this entire procedure requires just the knowledge of the spectral curve and not the details of the matrix model from which this curve was derived. An important achievement of Eynard and Orantin \cite{EO07} was to realize that one can use the recursive solution of loop equations to assign correlators $W^{(g)}_n(p_1,\ldots,p_n)$ and $F_g$ to an arbitrary algebraic curve, not necessarily of matrix model origin. On the other hand, it is guaranteed that $F_g$ computed for the spectral curve of a matrix model reproduce the free energies. For various interesting applications of this topological recursion to other problems related to moduli spaces of Riemann surfaces or algebraic curves, cf.\  \cite{BEMS,BKMP,BCMS-FP,BS-mirror,CMS,abmodel,Eynard-moduli,EO07,EO08,MulPen,Norbury,Norbury-Scott,Zhu}.

It was already essentially known that in terms of the cut endpoints $a<b$, the eigenvalue density $\rho(x)$ and the spectral curve $y=M(x)\sqrt{(x-a)(x-b)}$, one can compute the leading free energies $F_0$ and $F_1$ of (\ref{Zintro}), and we follow this procedure for our particular matrix model to derive the multiplicities $c_{g,b}(n)$ of chord diagrams, for $g=0,1$, in appendix \ref{preapp}.  The genus zero free energy can be found from a solution to the variational problem \cite{BIPZ,Marino05}, which leads to the following form of the effective potential
\be
F_0 = -\frac{T}{2} \Big( \int dx \rho(x) V(x) - \lambda t_0 \Big),       \label{F0def}
\ee
where $\lambda$ is a Lagrange multiplier enforcing preservation of the number of eigenvalues $t_0=\int \rho(x)dx$. 
The genus one free energy can be computed as in \cite{ACM1992} (for generalization to more cuts see \cite{chekhov2004}):
\be
F_1 = -\frac{1}{24} \log \big( M(a) M(b) (a-b)^4  \big).    \label{F1def}
\ee
Explicit results for genus zero and one in the model with potential (\ref{Vmatrix-intro}) are presented in appendix \ref{preapp}. These results reproduce known answers (up to three backbones, as given in (\ref{C0-Catalan}), (\ref{Cbb1}) and (\ref{Cg23})), and more importantly provide generating functions for arbitrary numbers of chords and backbones. For example, we determine that generating functions of diagrams  in genus zero and one, on four backbones, with an arbitrary number of chords take the form
\be
\aligned
& C_{0,4} (z)= \frac{24 z^3 (3 + 18 z + 8 z^2)}{(1 - 4 z)^{7}} =  72 z^3 + 2448 z^4 + \ldots  \\
& C_{1,4} (z)= \frac{24 z^5 (715 + 7551z + 12456 z^2 + 2096 z^3) }{(1 - 4 z)^{10}} = 17160 z^5 + \dots  \label{C0414intro}
\endaligned
\ee
To illustrate the combinatorial complexity of chord diagrams and appreciate precisely the generating functions which we derive, we show in appendix \ref{app-count} by explicit enumeration that $c_{0,4}(3)=72$ and $c_{0,4}(4)=2448$ as predicted by (\ref{C0414intro}).

In this paper, apart from the side-note of determining the general $F_0$ (by saddle-point methods) and $F_1$ (due to Chekhov \cite{chekhov2004}) and analyzing them for our model in appendix \ref{preapp},
our main result is a procedure based on topological recursion for calculating any desired one-cut free energy $F_g$, for any $g\geq 2$, on a sufficiently large computer.  We implement it to find, for example,
\be
\aligned
{F}_2 & = -\frac{t^4(1-\sigma)^2}{240 \delta^4 (1 - \delta - 4 \sigma + 3 \sigma^2)^5 (1 + \delta - 4 \sigma + 3 \sigma^2)^5} \times \nonumber \\
& \times \biggl(160 \delta^4 (1 - 3 \sigma)^4 (1 - \sigma)^6 
- 80 \delta^2 (1 - 3 \sigma)^6 (1 - \sigma)^8   \nonumber\\
& + 16 (1 - 3 \sigma)^8 (1 - \sigma)^{10} + \delta^{10} (-16 + 219 \sigma - 462 \sigma^2 + 252 \sigma^3)   \nonumber \\
& + 10 \delta^6 (1 - 3 \sigma)^2 (1 - \sigma)^4 (-16 - 126 \sigma - 423 \sigma^2 + 2286 \sigma^3 - 
     2862 \sigma^4 + 1134 \sigma^5) \nonumber \\
& + 5 \delta^8 (1 - \sigma)^2 (16 + 189 \sigma - 2970 \sigma^2 + 9549 \sigma^3 - 11286 \sigma^4 + 4536 \sigma^5)\biggr ),  \nonumber
\endaligned
\ee
where $\sigma=t(a+b)/2$ and $\delta=t(a-b)/2$.
This can be solved exactly in one of $s,t$ and perturbatively in the other to extract specific $C_{g,b}(z)$ as we will explain.  These methods of computation apply in all higher genera
as well, and our similar but longer expression for $F_3$ is given in (\ref{F2F3}). We again reproduce known results (up to three backbones, given in (\ref{Cbb1}) and (\ref{Cg23})) and then find generating functions for arbitrary numbers of chords and backbones. For instance as  consequences of our computations, we find:
\be
\aligned
C_{2,4}(z) & \, = \, \frac{144 z^7}{(1 - 4 z)^{13}} (38675 + 620648 z + 2087808 z^2 \nonumber \\
    & + 1569328 z^3 +  134208 z^4)  , \nonumber  \\
C_{3,4}(z) & \, = \,  \frac{48z^9}{ (1 - 4 z)^{16}} (53416125 + 1194366915 z + 6557325096 z^2 \nonumber \\
    & + 10738411392 z^3 + 4580024832 z^4 + 236239616 z^5),   \nonumber\\
C_{2,5}(z) & \, = \, \frac{144 z^8}{(1 - 4 z)^{{31}\over 2}} (2543625 + 62424520 z + 375044396 z^2 \nonumber \\
    & + 671666053 z^3 + 314761848 z^4 + 18335696 z^5)  , \nonumber  \\
C_{3,5}(z) & \, = \,  \frac{720z^{10}}{ (1 - 4 z)^{{37}\over 2}} (360380790 + 11275076865 z + 95744892585 z^2  \nonumber \\
    & + 282797424880 z^3 + 291167707410 z^4 + 85497242928 z^5 \nonumber\\
    &+ 3218434848 z^6).   \nonumber
\endaligned
\ee

\bigskip

Let us also comment on the  profound import and wide scope of these results.
The numbers $c_{g,b}(n)$ are of significance in computational biology because they describe the possible non-trivial
topological types of complexes of several interacting RNA molecules as follows.  Each backbone is identified with the
sugar-phosphate backbone (hence the terminology) of a single RNA molecule oriented from its 5' to 3' end.  If two nucleic acids comprising the $b$ RNA molecules participate in a Watson-Crick basepair, then we add a corresponding chord taking care
that chord endpoints in each backbone occur in the correct order corresponding to the primary structure, i.e., the word in 
the four-letter alphabet of nucleic acids that determines the RNA  molecule.
In this way, a complex of interacting RNA molecules determines a chord diagram, and we demand that it is connected in order
to guarantee an appropriate non-triviality of the interaction.  Note in particular that nucleotides not participating
in basepairs play no role in this model, i.e., there are no isolated vertices.  

The genus of the diagram determines the topological complexity of the interaction.  This genus filtration has been profitably employed in a number of studies \cite{OrlandZee02,
Pillsbury05one,Vernizzi05,Bon08,Garg-Deo09,Erba-Zemba09} as well as in  folding algorithms 
\cite{gfold,Pillsbury05two} in the special case of a single RNA molecule (i.e.,  $b=1$), which will be further discussed in section \ref{sec:b=1}.  Likewise, a folding algorithm for two interacting RNA molecules (i.e., $b=2$) pertinent to antisense RNA for instance has been studied in \cite{gfold2}.

At the same time, a simple transform of these numbers $c_{g,b}(n)$ count a sub-class of chord diagrams called ``shapes''.  These are discussed in the next section and give the
number of cells in the ideal cell decomposition \cite{Pennerbook12} of Riemann's moduli space for a surface
of genus $g\geq 0$ with $b\geq 1$ boundary components provided $2g+b>2$.

The numbers computed here using the topological recursion are therefore at once of significance in computational biology and in geometry thus representing a remarkable confluence of biology, mathematics and physics.

This paper is organized as follows.  There is background material in the next section including computations with the Gaussian potential itself for chord diagrams on at most three backbones. In section \ref{sec:mm},
 we introduce our new model and determine the form of its potential (\ref{Vmatrix-intro}). In order to solve this model, we first review some general relevant properties of matrix models in section \ref{sec:matrix} and introduce the formalism of topological recursion in section \ref{sec:recursion}. Finally in section \ref{sec:solution}, we solve the matrix model (\ref{Zintro}) with the potential (\ref{Vmatrix-intro}): we begin by determining its spectral curve and then compute the correlators $W^{(g)}_n(p_1,\ldots,p_n)$ and free energies $F_g$ assigned to it through the topological recursion formalism. As the potential (\ref{Vmatrix-intro}) depends on parameters $s$ (which generates numbers of backbones) and $t$ (which generates numbers of chords), also the spectral curve, and then correlators and free energies depend on these two parameters as well. 
In appendix \ref{preapp}, we present the computations of $F_0$ and $F_1$, which are performed independently of the topological recursion,  and in appendix \ref{app}, we write down the (perturbative) solution of the equations determining cut endpoints in our model.
In the final appendix \ref{app-count}, we
directly enumerate particular classes of chord diagrams  just to confirm
that indeed $c_{0,4}(3)=72$ and $c_{0,4}(4)=2448$.  This enumeration also
gives a sense of the simplest cases of objects we are counting here and can be read first if desired.

\section{Background}

\subsection{Chord diagrams, seeds and shapes}
Two chords $\gamma$ and $\gamma'$ in a chord diagram $C$ with respective endpoints $x,y$ and $x',y'$ are said to be {\it parallel} if $x,x'$, as well as $y,y'$, lie in a common backbone with no chord endpoints in between, where $x< x'$ and $y'<y$.  Parallelism generates an equivalence relation whose equivalence classes are called {\it stacks}.  

Suppose the endpoints $x,y$ of a chord $\gamma$ lie in a common backbone $\beta$.  If there are no chord endpoints between
$x,y$, then $\gamma$ is called a {\it pimple} on $\beta$, and if all chord endpoints in $\beta$ lie between $x$ and $y$, then $\gamma$ is called a  {\it rainbow} on $\beta$.

A {\it seed} is a chord diagram where every stack has cardinality one so that each pimple is a rainbow, and a seed is a {\it shape} provided every backbone has a rainbow.  In particular for a chord diagram of genus zero, an innermost chord with both endpoints on a single backbone is necessarily a pimple, so for example, the only seeds of genus zero on one backbone are the empty diagram with no chords and the diagram with a single chord given by the rainbow, and only the latter is also a shape.

\begin{figure}[h]
\begin{center}
\includegraphics[width=0.95\textwidth]{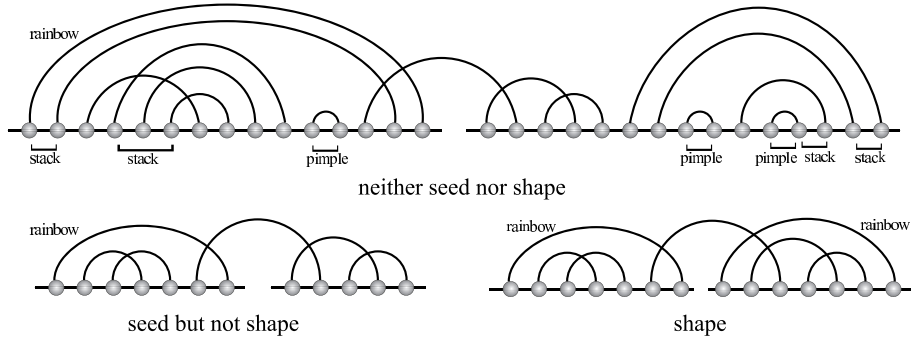}
\end{center}
\caption{Stacks, pimples, rainbows, shapes, seeds.}
\label{fig-items}
\end{figure}

\begin{proposition} \label{prop:fin}
Other than the case $g=0,b=1$ (which was just discussed), a seed of genus $g$ on $b$ backbones with $n$ chords must satisfy 
the inequalities $2g+b-1\leq n \leq 6g-6+5b$, and these constraints are sharp.
\end{proposition}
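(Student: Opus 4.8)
The plan is to convert both inequalities into a single statement about the number $r\ge 1$ of boundary components of the skinny surface $\Sigma(C)$, and then to bound $r$ by a boundary--tracing (Euler characteristic) count in which the seed hypothesis is used to forbid short boundary cycles. Recall from the Introduction that connectivity of $C$ gives $b-n=2-2g-r$, equivalently
\be
n=2g+b+r-2.
\ee
Since $C$ is connected we have $r\ge 1$, and substituting this immediately yields the lower bound $n\ge 2g+b-1$, with equality exactly when $\Sigma(C)$ has a single boundary component. The entire content of the upper bound $n\le 6g-6+5b$ is therefore the estimate $r\le 4g+4b-4$, which I would obtain as a face count.

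First I would describe $\partial\Sigma(C)$ combinatorially. Each chord contributes two band--sides, and each backbone $\beta_i$ carrying $k_i$ chord feet contributes $k_i$ boundary arcs: the $k_i-1$ \emph{short} arcs running between consecutive feet along the top, together with one \emph{long} arc running from the last foot around the two ends and the bottom back to the first foot. Tracing any boundary component one alternates strictly between chord--sides and backbone arcs, so each face $f$ contains equally many of each, say $\ell_f$, and summing over faces gives $\sum_f \ell_f = 2n$, the total number of chord--sides. There are exactly $b$ long arcs, one per backbone, so if we call a face \emph{special} when it contains at least one long arc, there are at most $b$ special faces.

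The heart of the argument is to show that in a seed every non-special face has $\ell_f\ge 3$. A non-special face with $\ell_f=1$ is the region cut off by a single chord whose two feet are consecutive on one backbone, i.e.\ a pimple; in a seed each pimple is a rainbow, and a chord that is simultaneously a pimple and a rainbow carries no other feet on its backbone, which for $b\ge 2$ disconnects the diagram and for $b=1$ forces $n=1$, the excluded case. A non-special face with $\ell_f=2$ is bounded by two chord--sides and two short arcs, which forces the two chords to have their feet pairwise consecutive on common backbones, i.e.\ to be parallel --- impossible in a seed, where every stack is a singleton. Granting $\ell_f\ge 3$ off the at most $b$ special faces (each of which contributes $\ell_f\ge 1$), I would estimate
\be
2n=\sum_f \ell_f \ \ge\ 3(r-b)+b \ =\ 3r-2b,
\ee
whence $r\le (2n+2b)/3$; combined with $n=2g+b+r-2$ this rearranges to exactly $n\le 6g-6+5b$.

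Finally I would settle sharpness by construction. For the lower bound one exhibits, for each admissible $(g,b)$, a connected seed whose surface has $r=1$. For the upper bound one produces a seed meeting every inequality above with equality: each backbone must carry a rainbow (so that the diagram is a shape and its long arc bounds a length--one special face), the $b$ long arcs must lie in distinct faces, and all remaining faces must be triangles with $\ell_f=3$; such extremal shapes can be assembled explicitly from small maximal blocks together with added handles, the point being only existence. The main obstacle is the face-length dichotomy of the third paragraph: one must check carefully --- including the degenerate configurations that traverse both sides of a single chord --- that the seed and connectivity hypotheses genuinely exclude all non-special faces of length one or two, since this is precisely where the constant $5b$, rather than $3b$, is produced by the $b$ permitted short special faces.
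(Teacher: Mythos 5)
Your proof is correct and follows essentially the same route as the paper: the lower bound comes from $n=2g+b+r-2$ together with $r\ge 1$, and the upper bound from the identical boundary-length count $2n\ge 3r-2b$, your ``special face / long arc'' bookkeeping being an explicit version of the paper's asserted inequality $2\nu_1+\nu_2\le 2b$ (your pimple/parallel-chord analysis, including the degenerate same-chord case, in fact supplies details the paper's proof omits). Note that, like the paper's own proof, you leave the extremal constructions for sharpness unfinished, so in both arguments that final clause remains an assertion rather than a proved statement.
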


\begin{proof}
The lower bound on $n$ follows from the fact that $2g+b-1\leq n+1-r$ according to  Euler characteristic considerations together with the 
obvious constraint that $r\geq 1$.  Conversely, a seed which saturates this lower bound necessarily has $r=1$. 

If the skinny surface 
associated to a seed has more than one boundary component, then there must be a chord with different boundary components on its two sides; removing this chord decreases $r$ by exactly one while preserving $g$ again from Euler characteristic considerations.
Define the ``length'' of a boundary component to be the number of chords it traverses counted with multiplicity. If there are $\nu_\ell$ boundary components of length $\ell$, then $2n=\sum_\ell \ell \nu_\ell$ since each side of each chord is traversed exactly once in the boundary.  It follows that $2n=2(b+r+2g-2)\geq \nu_1+2\nu_2+ 3(r-\nu_1-\nu_2)\geq 3r-2b$ since $2\nu_1+\nu_2\leq 2b$
(except in the excluded case for which $2\nu_1+\nu_2\leq 4b$), so
$4(b+g-1)\geq r$. There can thus be at most $4g+4b-5$ such removals of chords to produce a seed with
$r=1$ proving the upper bound on $n$.
\end{proof}

Let $c_{g,b}(n)$, $s_{g,b}(n)$ and $t_{g,b}(n)$, respectively, denote the number of isomorphism classes of (connected) chord diagrams, seeds and shapes of genus $g\geq 0$ with $n\geq 0$ chords on $b\geq 1$ backbones.
In each alphabetic case of $X=C,S,T$, we define corresponding generating functions
\be
\aligned
X_{g,b}(z)&=\sum_{n\geq 0} x_{g,b}(n)~z^{n},   \\
 \label{Xg}
\endaligned
\ee
and let
\be
X_g(z)=X_{g,1}(z).
\ee
Note that whereas $C_{g,b}(z)$ is a formal power series,
$S_{g,b}(z)$ and $T_{g,b}(z)$ are polynomials by Proposition
\ref{prop:fin}.

In particular, $C_0(z)$ is the generating function for the Catalan numbers, i.e., $c_0(n)=c_{0,1}(n)$ is the number of triangulations of a fixed polygon with $n+2$ sides.  They evidently satisfy $c_0(n+1)=\sum_{i=0}^n c_0(i)c_0(n-i)$
with basis $c_0(n)=1$, which gives $C_0(z)=1+z[C_0(z)]^2$, whence
\be
C_0(z)={{1-\sqrt{1-4z}}\over{2z}}.  \label{C0-Catalan}
\ee

\begin{theorem}\label{thm:genfn} The generating functions for seeds and chord diagrams are related by
$$\aligned
C_{g,b}(z)&=[C_0(z)]^b~S_{g,b}\biggl ({{C_0(z)-1}\over{2-C_0(z)}}\biggr ),\cr
S_{g,b}(z)&=\biggl [ {{z+1}\over{1+2z}}\biggr ]^b C_{g,b}\biggl ({{z(1+z)}\over{(1+2z)^2}} \biggr ).\cr
\endaligned$$
Furthermore, the generating functions for seeds and shapes are related by
$$
(1+z)^b~T_{g,b}(z)=z^b~S_{g,b}(z).
$$
\end{theorem}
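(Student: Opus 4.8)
The plan is to prove the two substitutions relating $C_{g,b}$ and $S_{g,b}$ together with the seed--shape identity, handling the $C$--$S$ pair and the shape identity separately. I first note that the two $C$--$S$ relations are \emph{mutually inverse substitutions}, so only one needs a combinatorial proof. Writing $C=C_0(z)$ and $w=\frac{C-1}{2-C}$ and using $C=1+zC^2$ (equivalently $z=\frac{C-1}{C^2}$), a direct computation gives $\frac{w+1}{1+2w}=\frac1C$ and $\frac{w(1+w)}{(1+2w)^2}=z$. Substituting $w$ for the argument in the second relation $S_{g,b}(z)=\big[\frac{z+1}{1+2z}\big]^b C_{g,b}\big(\frac{z(1+z)}{(1+2z)^2}\big)$ therefore returns the first. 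Hence it suffices to prove the single identity $C_{g,b}(z)=[C_0(z)]^b\,S_{g,b}\!\big(\frac{C_0(z)-1}{2-C_0(z)}\big)$ and the shape identity.

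For the shape identity $(1+z)^b T_{g,b}(z)=z^b S_{g,b}(z)$ I would argue by a rainbow (de)coration bijection. First I record two structural facts about a seed: each backbone carries \emph{at most one} rainbow (two rainbows on one backbone would be nested with no endpoints in between, hence parallel, contradicting that a seed has no nontrivial stack), and deleting a rainbow never creates a new one (same parallelism argument). A shape is exactly a seed with a rainbow on every backbone. I then claim that deleting a rainbow preserves $g$ and $b$ while decreasing the number $r$ of boundary components by one, so by Euler characteristic the pair $(g,b)$ is fixed. The bijection sends a pair consisting of a shape $\tau$ of genus $g$ on $b$ backbones and a subset $B\subseteq\{1,\dots,b\}$ to the seed obtained by deleting the rainbows on the backbones indexed by $B$; the inverse restores the missing rainbows. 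A seed with $n$ chords thus arises from a shape with $n+|B|$ chords in $\binom{b}{|B|}$ ways, so summing $z^{|\tau|-|B|}$ over $\tau$ and $B\subseteq\{1,\dots,b\}$ yields $S_{g,b}(z)=\big(\tfrac{1+z}{z}\big)^b T_{g,b}(z)$, which is the asserted identity.

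For the remaining relation I would set up a genus-preserving \emph{reduction} from chord diagrams to seeds and enumerate its fibers. Starting from a chord diagram $D$ of genus $g$ on $b$ backbones, repeatedly delete ``planar--removable'' arcs: within each stack keep a single chord, and delete every pimple that fails to be a rainbow. Each deletion removes an arc bounding a bigon-disk, hence is planar, preserving $(g,b)$ and connectivity (a pimple or stack-duplicate has both feet local to one backbone and cannot disconnect), and the process terminates at a well-defined seed $\sigma=\mathrm{seed}(D)$; confluence (independence of the order of deletions) must be checked. Conversely, the chord diagrams with $\mathrm{seed}(D)=\sigma$ are recovered from $\sigma$ by \emph{inflation}: into each slot determined by $\sigma$ one inserts a connected noncrossing chord structure, counted by the Catalan series $C_0(z)$ since such structures on a single segment are exactly the genus-zero diagrams of (\ref{C0-Catalan}). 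The slots split into $b$ backbone-level families, contributing $[C_0(z)]^b$, and one family attached to each of the $m$ chords of $\sigma$; since inflation inserts only planar structure, $(g,b)$ is unchanged, and the fiber generating function factorizes as $[C_0(z)]^b\,u^{m}$ for a per-chord insertion series $u$. Summing over seeds gives $C_{g,b}(z)=[C_0(z)]^b\,S_{g,b}(u)$.

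The crux is twofold. First, one must show that the per-chord insertion series is \emph{exactly} $u=\frac{C_0(z)-1}{2-C_0(z)}=\frac{zC_0(z)^2}{2-C_0(z)}$: this requires the careful slot bookkeeping over the $2m+b$ gaps between consecutive feet (endpoints) along the backbones, correctly apportioning the shared Catalan factors between adjacent arcs; the algebraic shadow is the identity $u^m[C_0]^b=z^m[C_0]^{2m+b}(2-C_0)^{-m}$, where $(2-C_0)^{-1}=\sum_{j\ge0}(C_0-1)^j$ records a sequence of nonempty noncrossing insertions. Second, and most delicate, is verifying that both the reduction and its inflation inverse are genus- and boundary-preserving — that collapsing a stack, deleting a non-rainbow pimple, and the dual insertions are all planar operations fixing $(g,b)$ — and that the reduction is confluent; this Euler-characteristic and boundary-walk analysis is the main obstacle, and the same analysis supplies the genus-preservation under rainbow deletion needed for the shape identity. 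Once the single substitution and the shape bijection are established, the remaining $C$--$S$ identity follows formally from the inversion computed at the outset.
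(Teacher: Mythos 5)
Your proposal is correct and takes essentially the same route as the paper: the identity $\frac{C_0-1}{2-C_0}=\sum_{j\ge 1}(zC_0^2)^j$ interpreted as inflating each seed chord into a stack of parallel chords with Catalan insertions before each resulting endpoint plus one extra factor $C_0$ per backbone, the inverse substitution $z=\frac{u(1+u)}{(1+2u)^2}$ with $C_0\bigl(\frac{u(1+u)}{(1+2u)^2}\bigr)=\frac{1+2u}{1+u}$ relating the two $C$--$S$ formulas, and the rainbow-(de)coration reading of $(1+z)^b\,T_{g,b}(z)=z^b\,S_{g,b}(z)$. The verifications you defer as ``the crux'' (genus preservation of the planar operations and uniqueness/confluence of the reduction) are precisely the points the paper's proof also asserts without detail---it simply states the inflation correspondence and that connectivity is preserved---so your outline is no less complete than the published argument, and your treatment of the seed--shape identity is in fact more careful than the paper's one-line remark.
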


\begin{proof}  Writing simply $C_0$ for $C_0(z)$ and using $C_0-1=zC_0^2$, we find
$$
{{C_0-1}\over{2-C_0}}={{C_0-1}\over{1-(C_0-1)}}=zC_0^2\sum_{i\geq 0}(zC_0^2)^i=\sum_{j\geq 1}(zC_0^2)^j.
$$
As an argument of $S_{g,b}$, the $j$th term in the sum corresponds to inflating a single arc in a seed to a stack 
of cardinality $j\geq 1$ as well as inserting a genus zero diagram immediately preceding each of the resulting $2j$
chord endpoints.  There is yet another factor $C_0$ arising from the insertion of a genus zero diagram
following the last endpoint of the seed on each backbone accounting for the further factor $C_0^b$.
The resulting chord diagram is connected if and only if the seed itself is connected, and this proves the
first formula.

For the second formula, direct calculation shows that $z={{u(1+u)}\over{(1+2u)^2}}$ inverts the expression
$u={{C_0(z)-1}\over{2-C_0(z)}}={{1-\sqrt{1-4z}}\over{2\sqrt{1-4z}}}$, so the first formula reads
$$S_{g,b}(u)=\biggl [C_0\biggl ( {{u(1+u)}\over{(1+2u)^2}}\biggr ) \biggr ]^{-b}~C_{g,b}\biggl ( {{u(1+u)}\over{(1+2u)^2}}\biggr ).$$
Further direct computation substituting $z={{u(1+u)}\over{(1+2u)^2}}$ into $C_0(z)={{1-\sqrt{1-4z}}\over{2z}}$ shows that
$C_0\bigl ( {{u(1+u)}\over{(1+2u)^2}}\bigr )={{1+2u}\over{1+u}}$, and the expression for $S_{g,b}$ follows.

The third formula is truly elementary since a shape is by definition simply a seed together with a rainbow on each
backbone.
\end{proof}

Seeds are useful insofar as chord diagrams as well as their generalizations allowing isolated vertices which arise in practice can be enumerated
from them, cf.\ \cite{gfold,Reidysbook} for instance. 
Shapes are important because of the following fundamental observation from \cite{APRW,Pennerbook12}:

\begin{theorem}
Provided $2g+b>2$, the set of shapes of genus $g\geq 0$ on $b\geq 1$ backbones with its natural partial order under inclusion
is isomorphic to the ideal cell decomposition of Riemann's moduli space of a surface of genus $g$ with
$b$ boundary components.
\end{theorem}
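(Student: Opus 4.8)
The plan is to derive the statement from the theory of decorated Teichm\"uller spaces of bordered surfaces, for which the hypothesis $2g+b>2$ is exactly the condition $\chi(F_{g,b})=2-2g-b<0$ that the surface $F_{g,b}$ of genus $g$ with $b$ boundary components carries a hyperbolic structure. The first step is to recall Penner's ideal cell decomposition \cite{Pennerbook12}: the decorated Teichm\"uller space of $F_{g,b}$ admits a decomposition into cells, invariant under the mapping class group, whose cells are indexed by the isotopy classes of fatgraph spines $G\subset F_{g,b}$ dual to ideal cell decompositions of the surface, with the closure (face) partial order given by collapsing edges of $G$. Quotienting by the mapping class group then produces exactly the ideal cell decomposition of the moduli space appearing in the statement, its cells indexed by \emph{isomorphism classes} of such spines and ordered by degeneration. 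I take this geometric input, developed in \cite{APRW,Pennerbook12}, as established.

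The core of the argument is a partial-order isomorphism between shapes of genus $g$ on $b$ backbones and the spines indexing these cells, which, following \cite{APRW}, I would construct through the ribbon-graph formalism of the Introduction. A spine $G$ of $F_{g,b}$ has precisely $b$ boundary cycles, one per boundary component of $F_{g,b}$; placing the distinguished marked point on each boundary component cuts the corresponding cyclic order of incident edge-sides into a linear order, and these $b$ linearly ordered sequences are the backbones of an associated chord diagram, the two sides of each edge of $G$ being joined into a single chord. Running the skinny-surface construction shows this chord diagram has genus $g$ and $b$ backbones, so it remains to match the combinatorial normalizations: I would check that the \emph{seed} condition---that every stack has cardinality one---corresponds to $G$ being reduced, i.e.\ carrying no collapsible bigon, and that the requirement that every backbone carry a \emph{rainbow} corresponds to $G$ genuinely filling $F_{g,b}$ so that each boundary component is faithfully represented. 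The bounds $2g+b-1\le n\le 6g-6+5b$ of Proposition~\ref{prop:fin} then provide a consistency check, pinning the number of chords against the admissible range of cell dimensions.

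It then remains to verify that the correspondence is order-preserving in both directions. The natural order on shapes is sub-diagram inclusion, so that $S\subseteq S'$ when $S$ arises from $S'$ by deleting chords (and renormalizing to a shape); under the dictionary above this is precisely the collapse of the corresponding edges of the spine of $S'$, which is by definition the face relation placing the cell of $S$ in the closure of the cell of $S'$. Since chord deletion and edge collapse are mutually inverse under the correspondence, the bijection transports the inclusion order of shapes to the closure order of cells and conversely, yielding the asserted isomorphism of posets.

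The hard part will be the second step: establishing the combinatorial dictionary on the nose. One must show that the seed and rainbow conditions coincide exactly with reducedness and filling---admitting no shape that fails to give a spine and omitting no spine whose linearization violates a shape condition---and that the passage to isomorphism classes faithfully models the mapping-class-group quotient, including the constraint stressed in the Introduction that a permutation of backbones must preserve their orientations. A naive identification of chords with edges of $G$ already fails to match the dimension count, so the precise correspondence, together with the treatment of the marked points, the boundary labeling, and the small-$(g,b)$ edge cases, is where the genuine content lies; the remainder is Euler-characteristic bookkeeping of the kind carried out in the proof of Proposition~\ref{prop:fin}.
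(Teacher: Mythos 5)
Your overall strategy is the same as the paper's: invoke the Penner--Strebel ideal cell decomposition from \cite{APRW,Pennerbook12}, set up a duality exchanging backbones with boundary components (the paper collapses each backbone of a shape to a vertex and takes the dual fatgraph; you run the same duality in the inverse direction, linearizing the boundary cycles of a spine), and match the partial orders via chord deletion $\leftrightarrow$ edge collapse. But there is a genuine gap, and it sits exactly where you say the ``genuine content lies'': the combinatorial dictionary is never established, and the identifications you propose for it are wrong. ``Filling'' is part of the definition of a spine, so it cannot be the condition that singles out diagrams with a rainbow on every backbone; and the seed condition does not correspond to the absence of a collapsible bigon. The missing idea, which is precisely what the paper's proof supplies, is that \emph{the rainbows are dual to the tails}: in the decomposition of moduli space for a bordered surface \cite{Pennerbook12}, each cell-indexing fatgraph carries a distinguished univalent edge (tail) at the marked point of each boundary component, and under the duality the collapsed rainbow loop at each backbone-vertex becomes exactly this tail. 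The seed condition, in turn, corresponds to the dual fatgraph having no vertices of valence one or two: a non-rainbow pimple produces a boundary cycle of length one in the collapsed diagram, hence a univalent dual vertex, and a stack of cardinality two produces a boundary cycle of length two, hence a valence-two dual vertex, neither of which occurs among the fatgraphs indexing cells.

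Once rainbows $\leftrightarrow$ tails is in place, the dimension-count obstruction you flag dissolves rather than being a separate difficulty: chords correspond to \emph{all} edges of the spine, tails included. An ideal triangulation of a genus $g$ surface with $b$ boundary components, one marked point on each, has $6g-6+4b$ arcs, so its dual spine has $6g-6+4b$ ordinary edges plus $b$ tails, i.e.\ $6g-6+5b$ edges in total, matching the upper bound of Proposition \ref{prop:fin} for the number of chords. Equivalently, in your direction of the correspondence: cut each boundary cycle of the tailed spine at the tip of its tail, and the two consecutive traversals of the tail become the first and last chord endpoints on that backbone, i.e.\ the rainbow; with that cut, the resulting diagram is automatically a seed (a stack or non-rainbow pimple would force a disk complementary region, contradicting that the complement of a spine is a union of annuli). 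Without the tails, your map from spines to chord diagrams cannot produce the rainbows at all, so the correspondence as you have set it up would miss every shape and cannot be completed as written.
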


\begin{proof} Collapsing each backbone to a distinct vertex produces from a shape a fatgraph where each vertex
has a loop connecting consecutive half-edges arising from the rainbow.  The dual fatgraph of this fatgraph arises by interchanging the roles of vertices and boundary components.  The collection of duals of shapes of genus $g$ on $b$ backbones
gives precisely  those fatgraphs that arise in the Penner-Strebel
ideal cell decomposition of the associated Riemann moduli space,
where the collapsed rainbows are dual to the tails discussed in \cite{Pennerbook12}. 
Removal of chords
corresponds to contraction of dual fatgraph edges.
\end{proof}

In light of Theorem \ref{thm:genfn}, the explicit calculation of the generating function for chord diagrams we perform here thus gives as a consequence also the numbers of cells of given dimension in Riemann's moduli spaces of bordered surfaces.

\subsection{The special case of one backbone}\label{sec:b=1}

For one backbone, there is the remarkable formula of Harer-Zagier \cite{HZ}, which we discuss here, that arose in the calculation of the virtual Euler characteristics of Riemann moduli spaces for punctured surfaces and in principle solves for the
various $c_g(n)=c_{g,1}(n)$.  In effect, the Penner matrix model  \cite{Penner88} directly computed the former  without recourse to the latter.  There is a large literature on the Harer-Zagier formula, cf.\ \cite{LZ}  and the references therein including several derivations of it.  Here is this beautiful and striking formula:

\begin{theorem}\label{thm:cg}
We have the identity
$$1+2\sum_{n\geq 0} \sum_{2g\leq n} {{{ c}_g(n)N^{n+1-2g}}\over {(2n-1)!!}}~z^{n+1}=
\biggl ({{1+z}\over {1-z}}\biggr )^N.$$
Furthermore, the $c_g(n)$ satisfy the recursion
$$
(n+1)\,{c}_g(n)  =   2(2n-1)\, {c}_g(n-1)+
                          (2n-1)(n-1)(2n-3)\,{c}_{g-1}(n-2).$$

\end{theorem}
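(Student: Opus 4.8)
The plan is to recognize the numbers $c_g(n)=c_{g,1}(n)$ as Gaussian matrix-model moments and to run the classical Harer--Zagier argument, which for this paper is just the $st=0$ specialization of the machinery already in place. On a single backbone every chord diagram is connected, so a diagram with $n$ chords is simply a pairing of its $2n$ endpoints, and by the Euler characteristic relation its skinny surface has $r=n+1-2g$ boundary components. Applying Wick's theorem to the Gaussian weight $e^{-\frac12\,\mathrm{tr}\,H^2}$ on $N\times N$ Hermitian matrices expands $\langle\mathrm{tr}\,H^{2n}\rangle$ over exactly these pairings, each weighted by $N$ to the number of index loops, i.e.\ by $N^{n+1-2g}$. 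Hence, writing $T(n):=\langle\mathrm{tr}\,H^{2n}\rangle$, we have $T(n)=\sum_{g}c_g(n)\,N^{n+1-2g}$, and setting $N=1$ gives $T(n)|_{N=1}=\sum_g c_g(n)=(2n-1)!!$, the total number of pairings. Thus the left-hand side of the asserted identity is exactly $1+2\sum_{n\ge0}\frac{T(n)}{(2n-1)!!}z^{n+1}$, and the whole theorem becomes a statement about the generating function of normalized Gaussian moments.

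The crux, and what I expect to be the \emph{main obstacle}, is to evaluate these moments in closed form. I would pass to the eigenvalue representation, where the joint density is proportional to $\prod_{i<j}(\lambda_i-\lambda_j)^2\prod_i e^{-\lambda_i^2/2}$, and write the one-point density as $\rho(\lambda)=\sum_{k=0}^{N-1}\phi_k(\lambda)^2$ in terms of the normalized Hermite functions $\phi_k$. Then $T(n)=\int\lambda^{2n}\rho(\lambda)\,d\lambda$, and the required input is the classical Hermite moment integral together with a summation over $k=0,\dots,N-1$. Carrying this out is precisely the content of Harer--Zagier and yields the explicit formula $\frac{T(n)}{(2n-1)!!}=\sum_{j\ge0}\binom{n}{j}\binom{N}{j+1}2^{\,j}$. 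This orthogonal-polynomial computation is the only genuinely nonformal step; everything after it is bookkeeping. Alternatively one can avoid Hermite functions and derive the same formula from an integration-by-parts recursion for $T(n)$, but a summation identity must be produced one way or another.

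Granting the explicit formula, the first assertion follows by a geometric resummation: substituting into $1+2\sum_{n\ge0}\frac{T(n)}{(2n-1)!!}z^{n+1}$ and exchanging the order of summation, I would use $\sum_{n\ge j}\binom{n}{j}z^{n+1}=\frac{z^{j+1}}{(1-z)^{j+1}}$ to obtain $1+\sum_{j\ge0}\binom{N}{j+1}w^{\,j+1}$ with $w=\frac{2z}{1-z}$. The binomial theorem collapses this to $(1+w)^N=\bigl(\frac{1+z}{1-z}\bigr)^N$, which is the claimed identity.

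Finally I would deduce the recursion formally. The function $f(z)=\bigl(\frac{1+z}{1-z}\bigr)^N$ satisfies the first-order linear ODE $(1-z^2)f'(z)=2N f(z)$; writing $f=\sum_{m\ge0}a_m z^m$ and comparing coefficients of $z^m$ gives $(m+1)a_{m+1}-(m-1)a_{m-1}=2N a_m$ for $m\ge2$. Since $a_{n+1}=2T(n)/(2n-1)!!$, putting $m=n+1$, clearing the double factorials via $(2n+1)!!/(2n-1)!!=2n+1$ and $(2n+1)!!/(2n-3)!!=(2n+1)(2n-1)$, and relabeling $n+1\mapsto n$, one obtains $(n+1)T(n)=2(2n-1)N\,T(n-1)+(n-1)(2n-1)(2n-3)T(n-2)$. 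Substituting $T(k)=\sum_g c_g(k)N^{k+1-2g}$ and reading off the coefficient of $N^{n+1-2g}$ — where the last term contributes $c_{g-1}(n-2)$ because its power of $N$ is lowered by two — yields exactly the stated recursion $(n+1)c_g(n)=2(2n-1)c_g(n-1)+(2n-1)(n-1)(2n-3)c_{g-1}(n-2)$. This step is entirely routine once the generating function is in hand.
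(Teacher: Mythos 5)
The paper offers no internal proof of this theorem to compare against: it is quoted as the classical Harer--Zagier formula, with a pointer to \cite{HZ} and the surrounding literature \cite{LZ}, so your blind attempt is necessarily doing more than the paper does. Your reconstruction follows the standard route and the steps you actually carry out are all correct: on a single backbone every diagram is connected, the Euler-characteristic relation $b-n=2-2g-r$ with $b=1$ gives $r=n+1-2g$ index loops, so Wick's theorem for the unscaled weight $e^{-\mathrm{tr}\,H^{2}/2}$ yields $T(n)=\langle \mathrm{tr}\,H^{2n}\rangle=\sum_{g}c_g(n)N^{n+1-2g}$ and $T(n)|_{N=1}=(2n-1)!!$; the resummation via $\sum_{n\ge j}\binom{n}{j}z^{n+1}=z^{j+1}/(1-z)^{j+1}$ and the binomial theorem in $w=2z/(1-z)$ does collapse to $\bigl((1+z)/(1-z)\bigr)^{N}$; and the recursion does follow from $(1-z^2)f'(z)=2Nf(z)$ by coefficient extraction, with the genus bookkeeping in the last term handled correctly since $N^{(n-2)+1-2(g-1)}=N^{n+1-2g}$. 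The one point where your write-up is not self-contained is exactly where you flag it: the closed-form moment evaluation $T(n)/(2n-1)!!=\sum_{j\ge 0}\binom{n}{j}\binom{N}{j+1}2^{j}$ is asserted with a plan (Hermite functions, or an integration-by-parts recursion) but never executed, and that lemma is the real content of Harer--Zagier --- everything downstream of it is formal manipulation. So what you have is a correct and complete derivation of both displayed statements \emph{from} that classical lemma, which is more than the paper provides, but to stand alone as a proof it still needs the orthogonal-polynomial computation written out.
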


This recursion directly translates into the ODE
\begin{eqnarray}\label{E:ODE}
z(1-4z)\frac{d}{dz}{C}_g(z) +(1-2z){C}_g(z) & = &
\Phi_{g-1}(z),
\end{eqnarray}
where
\begin{eqnarray*}
z^{-2}\Phi_{g-1}(z) =  4z^3\frac{d^3}{dz^3}{C}_{g-1}(z) +
24z^2 \frac{d^2}{dz^2}{C}_{g-1}(z) + 27z\frac{d}{dz}
{C}_{g-1}(z)+3{C}_{g-1}(z)
\end{eqnarray*}
with initial condition ${C}_g(0)=0$. From the ODE  (\ref{E:ODE}) and direct calculation
follow several results:

\begin{corollary}
For any $g\ge 1$, the generating function ${C}_g(z)$ is expressed as
\begin{eqnarray}\label{E:it}
{C}_g(z) = \, \frac{P_g(z)}{(1-4z)^{3g-{1\over 2}}},    \label{Cbb1}
\end{eqnarray}
where $P_g(z)$ is an integral polynomial divisible by      
$z^{2g}$ but no higher power and of degree at 
most $(3g-1)$ with $P_g(1/4)\neq 0$.
\end{corollary}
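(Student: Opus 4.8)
The plan is to induct on $g$, using the first-order ODE (\ref{E:ODE}) together with the initial condition $C_g(0)=0$ as the engine. The base case $g=1$ I would check directly: the recursion of Theorem \ref{thm:cg} gives $c_1(2)=1,\ c_1(3)=10,\dots$, matching $C_1(z)=z^2(1-4z)^{-5/2}$, so $P_1(z)=z^2$ has all the asserted properties ($\deg=2=3g-1$; divisible by $z^2=z^{2g}$ but no higher power; integral; $P_1(1/4)=\tfrac1{16}\ne0$). For the inductive step I assume the statement for $g-1\ge1$ and substitute the ansatz $C_g(z)=P_g(z)(1-4z)^{-(3g-1/2)}$ into (\ref{E:ODE}).

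Since $\frac{d}{dz}\big[P_g(1-4z)^{-m}\big]=\big[(1-4z)P_g'+4mP_g\big](1-4z)^{-m-1}$ with $m=3g-\tfrac12$, the factor $(1-4z)^{-m}$ cancels throughout and (\ref{E:ODE}) collapses to a polynomial-coefficient ODE
\[
z(1-4z)P_g'(z)+\big(1+(12g-4)z\big)P_g(z)=Q_{g-1}(z),\qquad Q_{g-1}:=(1-4z)^{3g-1/2}\,\Phi_{g-1}.
\]
By the inductive form of $C_{g-1}$ each $z^{j}C_{g-1}^{(j)}$ is a polynomial divided by $(1-4z)^{3(g-1)-1/2+j}$, and the highest such power is $3(g-1)-1/2+3=3g-1/2$, so $Q_{g-1}$ is genuinely a polynomial. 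Writing $P_g=\sum_n a_nz^n$ and $Q_{g-1}=\sum_n q_nz^n$ and matching coefficients yields $(n+1)a_n+4(3g-n)a_{n-1}=q_n$. The factor $4(3g-n)$ vanishes at $n=3g$, so this recursion terminates with $P_g$ a polynomial of degree $\le 3g-1$ exactly when $q_n=0$ for all $n\ge 3g$, i.e.\ when $\deg Q_{g-1}\le 3g-1$.

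Establishing that degree bound is the crux, and I expect it to be the main obstacle. The clean route is to rewrite the third-order operator inside $\Phi_{g-1}$ via the Euler operator $\theta=z\frac{d}{dz}$ as $4z^3\frac{d^3}{dz^3}+24z^2\frac{d^2}{dz^2}+27z\frac{d}{dz}+3=(2\theta+1)(2\theta+3)(\theta+1)$, whose roots are $\theta=-\tfrac12,-\tfrac32,-1$. In the variable $w=1-4z$ one has $\theta=(w-1)\frac{d}{dw}$, so $\theta$ acts on $w^{\beta}$ with leading eigenvalue $\beta$. The highest powers of $w$ occurring in $C_{g-1}=\tilde P(w)\,w^{-(3g-7/2)}$ are $w^{-1/2},w^{-3/2},w^{-5/2},\dots$; since the roots $-\tfrac12$ and $-\tfrac32$ resonate with the top two of these, the operator lowers the leading power of $C_{g-1}$ from $w^{-1/2}$ to $w^{-5/2}$ (concretely $(2\theta+1)w^{-1/2}=w^{-3/2}$ and $(2\theta+3)w^{-3/2}=3w^{-5/2}$, so the two offending powers drop out). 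Multiplying by $z^2(1-4z)^{3g-1/2}$, whose $w$-degrees are $2$ and $3g-\tfrac12$, then yields $\deg Q_{g-1}\le 2+(3g-\tfrac12)-\tfrac52=3g-1$, exactly as needed.

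Finally I would read off the remaining properties. Integrality: $P_g=(1-4z)^{3g}(1-4z)^{-1/2}C_g$ is a product of integer power series, since $(1-4z)^{-1/2}=\sum_n\binom{2n}{n}z^n$ and $c_g(n)\in\mathbb{Z}$; being also a polynomial, $P_g\in\mathbb{Z}[z]$. Order at the origin: $c_g(n)=0$ for $n<2g$ while $c_g(2g)\ne0$, both furnished by the sharp lower bound of Proposition \ref{prop:fin} with $b=1$, so $z^{2g}$ divides $C_g$, hence $P_g$, but $z^{2g+1}$ does not. Nonvanishing at $1/4$: evaluating the polynomial ODE at $z=1/4$ kills the derivative term (it carries a factor $1-4z$), leaving $3g\,P_g(1/4)=Q_{g-1}(1/4)$; a short computation of the leading pole of $\Phi_{g-1}$ at $z=1/4$ gives $Q_{g-1}(1/4)=\tfrac14\,k(k+1)(k+2)\,P_{g-1}(1/4)$ with $k=3(g-1)-\tfrac12>0$, which is nonzero (indeed positive) by the inductive hypothesis $P_{g-1}(1/4)\ne0$. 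This closes the induction.
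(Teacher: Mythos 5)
Your proof is correct and follows precisely the route the paper indicates: the paper offers no written proof beyond the remark that the corollary follows ``from the ODE (\ref{E:ODE}) and direct calculation,'' and your induction --- the polynomial ODE $z(1-4z)P_g'+(1+(12g-4)z)P_g=Q_{g-1}$, the degree bound on $Q_{g-1}$ via the factorization $4z^3\frac{d^3}{dz^3}+24z^2\frac{d^2}{dz^2}+27z\frac{d}{dz}+3=(2\theta+1)(2\theta+3)(\theta+1)$ and its resonances at $w^{-1/2},w^{-3/2}$, the evaluation at $z=1/4$ giving $3g\,P_g(1/4)=\tfrac14 k(k+1)(k+2)P_{g-1}(1/4)$, and the binomial-series argument for integrality --- supplies exactly that calculation, and every step checks out. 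The one point to tighten is the base case: matching finitely many coefficients does not establish $C_1(z)=z^2(1-4z)^{-5/2}$; instead compute $\Phi_0$ from the closed form (\ref{C0-Catalan}) of $C_0$ and verify the first-order ODE directly, which suffices since the power-series solution with $C_1(0)=0$ is unique.
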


The first few $P_g(z)$ are given as follows:
\begin{eqnarray*}
P_1(z) &=& z^2,\\
P_2(z) &=& 21z^4\, \left( z+1 \right)\\
P_3(z) &=&  11z^6\, \left( 158\,{z}^{2}+558\,z+135 \right),\\
P_4(z) &=&143z^8\left( 2339\,{z}^{3}+18378\,{z}^{2}+13689\,z+1575 \right),\\
P_5(z) &=&  88179z^{10}\, \left( 1354\,{z}^{4}+18908\,
{z}^{3}+28764\,{z}^{2}+9660\,z+675 \right).
\end{eqnarray*}
It is natural to speculate that the polynomials $P_g(z)$ themselves solve an enumerative problem and 
that there may be a purely combinatorial topological proof of Theorem \ref{thm:cg} based on a construction that in some way
``inflates" these structures using genus zero diagrams, to wit
$$\aligned
{C}_g(z)&= P_g(z) (\sqrt{1-4z})^{1-6g}\\
&=P_g(z)\biggl ( {{{C}_0(z)}\over{2-{C}_0(z)}}\biggr )^{6g-1}\\
&=P_g(z) \bigl ( {C}_0(z)\bigr )^{6g-1}\biggl (1+({C}_0(z)-1 \bigr )+({C}_0(z)-1 \bigr )^2+\cdots \biggr )^{6g-1}\\
&=P_g(z) \bigl ( {C}_0(z)\bigr )^{6g-1}\biggl (1+z({C}_0(z))^2 +z^2({C}_0(z))^4+\cdots \biggr )^{6g-1}.\\
\endaligned$$

Another calculational consequence of the ODE  (\ref{E:ODE}) is an explicit recursion for the numbers of seeds:

\begin{corollary}\label{T:recu1}
We have ${S}_1 (z)= z^2+3\,z^3+3\,z^4+z^5$, and for any $g>1$, the following five-term
recursion for the number of seeds:
\begin{eqnarray}\label{E:recursion1}
(n+1)\,{ s}_g(n) & = &
                     \sum_{h=2}^6 g_h(n)\, { s}_{g-1}(n-h), \quad\text{where}
\end{eqnarray}
\begin{eqnarray*}
g_2(n) & = & (2n-1)(n-1)(2n-3) \\
g_3(n) & = & 2(4n-11)(n-1)(2n-3) \\
g_4(n) & = & (24n^3-180n^2+411n-285) \\
g_5(n) & = & (16n^3-156n^2+464n-414) \\
g_6(n) & = & 4(n-2)(n-4)(n-6).
\end{eqnarray*}
\end{corollary}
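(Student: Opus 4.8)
The plan is to transplant the Harer--Zagier ODE (\ref{E:ODE}) for $C_g(z)=C_{g,1}(z)$ onto the seed generating function $S_g(w)=S_{g,1}(w)$ using the explicit substitution of Theorem \ref{thm:genfn} in the case $b=1$. I introduce the change of variable $z=\frac{w(1+w)}{(1+2w)^2}$ that inverts $w=\frac{C_0(z)-1}{2-C_0(z)}$, and first record the identities
\[
\frac{dz}{dw}=\frac{1}{(1+2w)^3},\qquad 1-4z=\frac{1}{(1+2w)^2},\qquad 1-2z=\frac{1+2w+2w^2}{(1+2w)^2},
\]
together with $C_g(z)=\frac{1+2w}{1+w}\,S_g(w)$, all immediate from $z=\frac{w(1+w)}{(1+2w)^2}$. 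In particular $\frac{d}{dz}=(1+2w)^3\frac{d}{dw}$, so the substitution is polynomial in character and one may hope the entire ODE becomes a polynomial relation between $S_g$ and $S_{g-1}$.

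First I would transform the left-hand side of (\ref{E:ODE}). Using the identities above, a short calculation shows the two terms conspire so that
\[
z(1-4z)\frac{d}{dz}C_g(z)+(1-2z)C_g(z)=w\,\frac{d}{dw}S_g(w)+S_g(w)=\frac{d}{dw}\bigl(w\,S_g(w)\bigr),
\]
the miracle being that all the rational prefactors collapse (for instance $\frac{1+3w+2w^2}{(1+2w)(1+w)}=1$). Reading off the coefficient of $w^n$ gives exactly $(n+1)s_g(n)$, matching the left-hand side of the asserted recursion. The base case $S_1(w)=w^2+3w^3+3w^4+w^5$ follows by substituting $C_1(z)=z^2(1-4z)^{-5/2}$ from (\ref{Cbb1}) into $S_1(w)=\frac{w+1}{1+2w}C_1(z)$ and using $1-4z=(1+2w)^{-2}$, which yields $S_1(w)=w^2(1+w)^3$.

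Next I would handle the right-hand side $\Phi_{g-1}$. The key simplification is to rewrite the third-order operator defining $\Phi_{g-1}$ in terms of the Euler operator $\theta=z\frac{d}{dz}$: since $z^k\frac{d^k}{dz^k}=\theta(\theta-1)\cdots(\theta-k+1)$, the bracket defining $z^{-2}\Phi_{g-1}$ factors as $4\theta^3+12\theta^2+11\theta+3=(2\theta+1)(2\theta+3)(\theta+1)$, so that
\[
\Phi_{g-1}(z)=z^2\,(2\theta+1)(2\theta+3)(\theta+1)\,C_{g-1}(z).
\]
Under the substitution the Euler operator becomes the polynomial vector field $\theta=w(1+w)(1+2w)\frac{d}{dw}$. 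Applying $(2\theta+1)(2\theta+3)(\theta+1)$ to $C_{g-1}=\frac{1+2w}{1+w}S_{g-1}$ and multiplying by $z^2=\frac{w^2(1+w)^2}{(1+2w)^4}$, one checks that every factor of $(1+2w)$ in the denominator cancels: for instance the coefficient of $S_{g-1}$ itself works out to the pure power $3(1+2w)^5$, giving $3w^2(1+w)^2(1+2w)$ after multiplication by $z^2$, while the coefficient of $\frac{d^3}{dw^3}S_{g-1}$ is $4w^3(1+w)^2(1+2w)^4$, giving $4w^5(1+w)^4$. Thus the right-hand side becomes a genuine third-order \emph{polynomial} differential operator in $w$ applied to $S_{g-1}$, with coefficients supported on the monomials $w^2,\dots,w^9$.

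Finally I would extract the coefficient of $w^n$. Since $w^k\frac{d^j}{dw^j}S_{g-1}$ contributes $[\,n-k+j\,]_j\,s_{g-1}(n-k+j)$ to the coefficient of $w^n$ (where $[m]_j=m(m-1)\cdots(m-j+1)$), a term of the polynomial operator carrying $w^k$ and $j$ derivatives shifts the seed index by exactly $h=k-j$; with $j\in\{0,1,2,3\}$ and the supports computed above ($k\in\{2,\dots,5\}$ for $j=0$, up to $k\in\{5,\dots,9\}$ for $j=3$), the resulting shifts $h=k-j$ all lie in $\{2,3,4,5,6\}$. Collecting, for each fixed $h$, the contributions from all four derivative orders assembles the polynomial $g_h(n)$, and equating with the transformed left-hand side yields $(n+1)s_g(n)=\sum_{h=2}^6 g_h(n)\,s_{g-1}(n-h)$. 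The main obstacle is purely this last bookkeeping: one must carry out the $(2\theta+1)(2\theta+3)(\theta+1)$ computation in full, verify that the $(1+2w)$-denominators cancel in all four coefficients (not only in the boundary cases $j=0,3$ checked above), and then correctly sum the four derivative-order contributions into each $g_h$. Everything else is a short and rigid calculation; no genuinely new idea is needed beyond the observation that the Euler-operator form of $\Phi_{g-1}$ transforms cleanly under the polynomial change of variable.
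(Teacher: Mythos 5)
Your proposal is correct and follows essentially the same route as the paper, which offers no explicit proof but presents the corollary precisely as a "calculational consequence" of the ODE (\ref{E:ODE}) combined with the change of variables of Theorem \ref{thm:genfn}; your key structural steps check out (the left side collapsing to $\frac{d}{dw}(wS_g(w))$, the base case $S_1(w)=w^2(1+w)^3$, the factorization $4\theta^3+12\theta^2+11\theta+3=(2\theta+1)(2\theta+3)(\theta+1)$, and the cancellation of $(1+2w)$-denominators in the coefficients you computed). The remaining bookkeeping you flag for the middle derivative orders is exactly the "direct calculation" the paper itself omits.
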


The first few seed polynomials are thus given by:
\begin{eqnarray*}
{S}_2 (z)&=& 21\,{z}^{4}+210\,{z}^{5}+840\,{z}^{6}+1785\,{z}^{7}+2205\,{z}^{8}
+ 1596 \,{z}^{9}\\
&&+630\,{z}^{10}+105\,{z}^{11} \nonumber \\
{S}_3 (z) &=& 1485\,{z}^{6}+28413\,{z}^{7}+225379\,{z}^{8}+1006929\,{z}^{9}+2862783
\,{z}^{10}\\
&&+5496645\,{z}^{11} 
 +7325703\,{z}^{12}+6812289\,{z}^{13}+
4348344\,{z}^{14}\\
&&+1819818\,{z}^{15}+450450\,{z}^{16}+50050\,{z}^{17} \nonumber \\
{S}_4 (z)&=& 225225\,{z}^{8}+6687252\,
          {z}^{9}+83809440\,{z}^{10}+605300410\,{z}^{11} \nonumber \\
&& + 2867032740\,{z}^{12}+9542753220\,{z}^{13}+23243924704\,{z}^{14}\\
&& +42438380985\,{z}^{15} +58817592405\,{z}^{16}+62093957640\,{z}^{17}\\
&& + 49660516620\,{z}^{18}+
      29612963952\,{z}^{19} 
 +12768025270\,{z}^{20} \nonumber \\
 &&+ 3763479720\,{z}^{21}+678978300\,{z}^{22}+
 56581525\,{z}^{23} \nonumber \\
{ S}_5 (z)&=& 59520825\,{z}^{10}+2458871415\,{z}^{11}+43395443091\,{z}^{12}+
447114000333\,{z}^{13} \nonumber \\
&&+3067654998408\,{z}^{14}+15065930976096\,{z}^{15}+55383949540920\,{z}^{16}\\
&&
+156912145081692\,{z}^{17}
 +349331909457531\,{z}^{18}+618798832452801\,{z}^{19}\\
 &&+878354703383157\,{z}^{20}+
1001499812704755\,{z}^{21} 
 +915317216039226\,{z}^{22}\\
&&+665807616672198\,{z}^{23}+380384810323518\,{z}^{24}+166997907962886\,{z}^{25}  \nonumber \\
&& +54384464384250\,{z}^{26}+12376076963250\,{z}^{27}  +1756856351250\,{z}^{
28}\\
&&+117123756750\,{z}^{29}  \nonumber \\
\end{eqnarray*}

\begin{remark} \cite{hillary}
With further work starting from this recursion, one can show that the number of seeds is log concave,
i.e., for $2g+1 \leq n \leq 6g-2$, we have \begin{equation}
{ s}_g(n)^2 \geq { s}_g(n+1)\, { s}_g(n-1).\nonumber
\end{equation}
In particular, it follows that the sequence  ${s}_g(2g), { s}_g(2g+1),
\ldots,  { s}_g(6g-1)$ is unimodal.
\end{remark}

 \subsection{The Gaussian resolvents for at most 3 backbones}\label{sec:bleq2}
 
 Provided the number $b$ of backbones is less than four, the loop equations \cite{ACKM,Eynard04}
 for the Gaussian potential permit the computation of the $c_{g,b}(n)$ in terms of $c_{g}(n)$ as we discuss here.  
 This in part explains an important aspect of the computations of the previous section.  
 See also \cite{MS}.
 
 For any $s\geq 1$, the {\it $s$-resolvent}  or {\it $s$-point function} of the potential $V(x)$ is defined to be
\be
\omega_s(x_1,\ldots ,x_s)=N^{s-2}~\bigl <  {\rm tr}~{1\over {x_1-H}}
~\cdots ~  {\rm tr}~{1\over {x_s-H}}\bigr >_{\textrm{conn}},    \label{omega-s}
\ee
where $\bigl < \Phi \bigr >_{conn}$ denotes the part of the correlator
${1\over Z}\int DH~e^{-N{\rm tr}V(H)} \Phi(H)$ arising via Wick's Theorem from
connected diagrams computed as a formal power series using ${\rm tr}~{1\over{x-H}}=\sum_{k\geq 0} {\rm tr}~{{~H^k}\over{x^{k+1}}}$, where the integral is over the $N \times N$ Hermitian matrices with their Haar measure
\be
DH=\biggl (\Lambda_{1\leq i<j\leq N}~d Re H_{ij}\wedge d Im H_{ij}~\biggr )\wedge
\biggl (\Lambda _{i=1}^n d H_{ii}\biggr )   \label{DH}
\ee
and with the partition function $Z=\int DH~e^{-N{\rm tr}V(H)}$.

For a polynomial potential $V(x)=\sum_{k= 1}^d t_kx^k$ for convenience (our methods apply more generally whenever the potential has rational derivative), we find that
\[\aligned
\omega_s(x_1,\ldots ,x_s)
&=\sum_{g\geq 0} N^{-2g}
\sum_{n_1,\ldots ,n_s\geq 0} {1\over{x_1^{n_1+1}\cdots x_s^{n_s+1}}}\\
&\hskip 2cm\sum_{\ell_1,\ell_3,\ldots ,\ell_d\geq 0}\prod _{k=1,k\neq 2}^d(-t_k)^{\ell _k}
\omega^{(g)}_{n_1,\ldots ,n_s}(\{\ell_1,\ell_3,\ldots ,\ell_d\})\\\\
&=\sum_{g\geq 0} N^{-2g}~\omega_s^{(g)}(x_1,\ldots ,x_s),\\
\endaligned     \label{omega-gn}
\]
where $\omega^{(g)}_{n_1,\ldots ,n_s}(\{\ell_1,\ell_3,\ldots ,\ell_d\})$ denotes
the number of isomorphism classes of connected chord diagrams of genus $g$
on $s$ labeled oriented backbones, with respective numbers $n_1,\ldots ,n_s$ of incident half-chords, plus $\ell_k$ unlabeled oriented backbones with $k$ incident half-chords, for $k\geq 1$.

Now specializing to the Gaussian potential $V(x)={1\over 2}x^2$ and letting $\omega_s(x)=\omega_s(x,\ldots ,x)$ for convenience,
we find
$$\omega_s(x)=\sum_{g\geq 0} N^{-2g} ~\omega_s^{(g)}(x)=x^{-s}~\sum_{g\geq 0} N^{-2g}~C_{g,s}(x^{-2})$$
in the earlier notation since $C_{g,b}(z^{-2})=z^s \omega_b^{(g)}(z)$.  The first Schwinger-Dyson equation or Ward identity with equal arguments in this case reads
$\omega_1^2(x)-x \omega_1(x)+1+N^{-2}~\omega_2(x)=0$, from which we solve for the 2-point function with equal arguments in terms of the 1-point function:
\begin{equation}\label{eq:star}
\omega_2(x)=N^2\bigl \{x \omega_1(x)-\omega_1^2(x)-1\bigr \} .
\end{equation}
Likewise, 
$
x \omega_2(x)=2\omega_1(x)\omega_2(x)+{1\over 2} \omega_1''(x)+N^{-2}\omega_3(x)
$
gives the 3-point function with equal arguments
\begin{equation}\label{eq:star2}
\omega_3(x)=N^2\bigl\{ x \omega_2(x)-{1\over 2} \omega_1''(x) -2\omega_1(x)\omega_2(x)\bigr\}
\end{equation}
in terms of $\omega_1(x)$ and its derivatives.

\begin{remark}
One cannot continue this recursion to express higher resolvents $\omega_s(x)$, for $s\geq 4$,
with equal arguments in terms of $\omega_1(x)$ alone because higher Schwinger-Dyson
equations include products of traces (for example,
$\omega_4(x)$ involves $\bigl <{\rm tr}~{1\over{(x-H)^3}}~{\rm tr}~{1\over{x-H}} \bigr>_{conn}$)
which cannot be evaluated in this manner.  We wonder if this is perhaps related to the behavior \cite{Penner08} of posets of chord diagrams: in the case of genus zero for $s$ backbones with $s=1,2,3$, the geometric realization of this poset is homeomorphic to a sphere; for $s=4$, it is homeomorphic to a simply connected manifold which is not a sphere; and for $s>4$, it is an increasingly singular non-manifold.
\end{remark}

\begin{theorem} \label{thm:g2g3}                 
For any $g\geq 0$, we have
\be
C_{g,2}(x) = {{P_{g,2}(x)}\over{(1-4x)^{3g+2}}} ~{\it and}~ C_{g,3}(x) = {{P_{g,3}(x)}\over{(1-4x)^{3g+5-{1\over 2}}}},  \label{Cg23}
\ee
where
\[\aligned
P_{g,2}(x)&={{x^{-1}P_{g+1}(x)-\sum_{h=1}^g P_h(x)P_{g+1-h}(x)}}\ and \\
P_{g,3}(x)&=x^{-1}P_{g+1,2}(x)-2\sum_{h=1}^{g+1} P_h(x)P_{g+1-h,2}(x)\\
&-P_{g+1}(x)\biggl [1+(42+60g)x+(96+336g+288g^2)x^2 \biggr ]\\
&-P'_{g+1}(x) (1-4x)\biggl [5x+(20+48g)x^2 \biggr ] \\
&- P''_{g+1}(x)(1-4x)^2 \biggl [2x^2 \biggr ].\\
\endaligned\]
\end{theorem}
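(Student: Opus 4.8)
The plan is to extract the genus-$g$ component of each Schwinger--Dyson equation and then translate the resulting identity from resolvents into the generating functions $C_{g,b}$, using the closed forms already established. Throughout I write $\omega_s(x)=\sum_{g\geq 0}N^{-2g}\omega_s^{(g)}(x)$ with $\omega_s^{(g)}(x)=x^{-s}C_{g,s}(x^{-2})$, and set $z=x^{-2}$ so that the ``$x$'' appearing in the statement is the generating-function variable.

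For the two-backbone formula I would start from (\ref{eq:star}) and read off the coefficient of $N^{-2g}$. Since its right-hand side carries an overall $N^2$, this couples $\omega_2^{(g)}$ to the genus $g+1$ one-point data, giving
$$\omega_2^{(g)}(x)=x\,\omega_1^{(g+1)}(x)-\sum_{h=0}^{g+1}\omega_1^{(h)}(x)\,\omega_1^{(g+1-h)}(x).$$
Substituting the dictionary and clearing powers of $z$ yields $C_{g,2}(z)=z^{-1}C_{g+1}(z)-\sum_{h=0}^{g+1}C_h(z)C_{g+1-h}(z)$. The crucial move is to isolate the two terms containing the genus-zero factor $C_0$ (namely $h=0$ and $h=g+1$) and invoke the identity $z^{-1}-2C_0(z)=z^{-1}\sqrt{1-4z}$, which is immediate from (\ref{C0-Catalan}). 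This converts the offending half-integer power in $z^{-1}C_{g+1}$ into the exact denominator $(1-4z)^{3g+2}$, while the interior sum $\sum_{h=1}^{g}C_hC_{g+1-h}$ already sits over $(1-4z)^{3g+2}$ once (\ref{Cbb1}) is inserted. Reading off numerators gives exactly $P_{g,2}$.

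For the three-backbone formula the same strategy applied to (\ref{eq:star2}) produces
$$\omega_3^{(g)}(x)=x\,\omega_2^{(g+1)}(x)-\tfrac12\bigl(\omega_1^{(g+1)}\bigr)''(x)-2\sum_{h=0}^{g+1}\omega_1^{(h)}(x)\,\omega_2^{(g+1-h)}(x).$$
The new ingredient is the derivative term: applying the chain rule to $\omega_1^{(g+1)}(x)=x^{-1}C_{g+1}(x^{-2})$ rewrites $\tfrac12(\omega_1^{(g+1)})''$ as a $z$-weighted combination of $C_{g+1}$, $C_{g+1}'$, $C_{g+1}''$. One again separates the single $C_0$-term from the product sum, uses the same $\sqrt{1-4z}$ identity on $z^{-1}C_{g+1,2}$, and substitutes both (\ref{Cbb1}) and the two-backbone form $C_{g,2}=P_{g,2}/(1-4z)^{3g+2}$ just proved.

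I expect the main obstacle to be purely computational and confined to this last case. After substituting $C_{g+1}=P_{g+1}(1-4z)^{-(3g+5/2)}$, differentiating twice generates the three distinct denominators $(1-4z)^{-(3g+5/2)}$, $(1-4z)^{-(3g+7/2)}$, $(1-4z)^{-(3g+9/2)}$, and one must clear these to the common $(1-4z)^{-(3g+9/2)}$ and collect the coefficients of $P_{g+1}$, $P_{g+1}'$, $P_{g+1}''$ by powers of $z$. Carrying out this bookkeeping (with $\alpha=3g+5/2$, so that the $\alpha(\alpha+1)$ term feeds the $g^2$ coefficients) should reproduce the bracketed polynomials $1+(42+60g)x+(96+336g+288g^2)x^2$, $(1-4x)[5x+(20+48g)x^2]$, and $(1-4x)^2[2x^2]$ of the statement, while the surviving terms $z^{-1}P_{g+1,2}$ and $-2\sum_{h=1}^{g+1}P_hP_{g+1-h,2}$ assemble into $P_{g,3}$.
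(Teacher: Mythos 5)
Your proposal is correct and follows essentially the same route as the paper: extract the coefficient of $N^{-2g}$ from the Schwinger--Dyson equations (\ref{eq:star}) and (\ref{eq:star2}), translate via $\omega_b^{(g)}(z)=z^{-b}C_{g,b}(z^{-2})$, and absorb the genus-zero factors through the identity $x^{-1}-2C_0(x)=x^{-1}\sqrt{1-4x}$ together with the closed form (\ref{Cbb1}). The paper works out the two-backbone case exactly as you do and dismisses the three-backbone case as ``analogous but slightly more involved,'' so your outline of that computation (the chain-rule rewriting of $\tfrac12(\omega_1^{(g+1)})''$ and the bookkeeping with $\alpha=3g+\tfrac52$ producing the brackets $1+(42+60g)x+(96+336g+288g^2)x^2$, $(1-4x)[5x+(20+48g)x^2]$, and $(1-4x)^2[2x^2]$) correctly supplies the detail the paper omits.
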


\begin{proof}
It follows from (\ref{eq:star}) that $\omega_2(z)$ is given by
$$\sum_{g\geq 0} N^{-2g} \omega_2^{(g)}(z)=N^2\biggl\{
z\sum_{g\geq 0}N^{-2g}\omega_1^{(g)}(z)-\bigl [
\sum_{g\geq 0} N^{-2g} \omega_1^{(g)}(z)\bigr ]^2-1
\biggr \}.
$$
Extracting the coefficient of $N^{-2g}$, we conclude that
$$\omega_2^{(g)}(z)=z\omega_1^{(g+1)}(z)-\sum_{h=0}^{g+1} \omega_1^{(h)}(z)\omega_1^{(g+1-h)}(z),$$
i.e.,
$$z^{-2}C_{g,2}(z^{-2})=z z^{-1}C_{g+1}(z^{-2})-\sum _{h=0}^{g+1} z^{-2}C_h(z^{-2})C_{g+1-h}(z^{-2}).$$
Setting $x=z^{-2}$,
we find
$$\aligned
C_{g,2}(x)&=x^{-1} C_{g+1}(x)-\sum_{h=0}^{g+1} C_h(x)C_{g+1-h}(x)\\
&=x^{-1}C_{g+1}(x)-\sum_{h=1}^gC_h(x)C_{g+1-h}(x)-2C_0(x)C_{g+1}(x),\\
\endaligned$$
and rewriting this using  (\ref{E:it}) gives
$$\aligned
C_{g,2}(x)&=\biggl\{{1\over x}-{{1-\sqrt{1-4x}}\over{x}} \biggr \}
{{P_{g+1}(x)}\over{(1-4x)^{3g+3-{1\over 2}}}}\\
&\hskip .5cm-{1\over{(1-4x)^{3g+2}}}\sum_{h=1}^g P_h(x) P_{g+1-h}(x)\\
&={{P_{g,2}(x)}\over{(1-4x)^{3g+2}}}\\
\endaligned$$
as was claimed.

Starting from (\ref{eq:star2}), the analogous but slightly more involved 
computation produces
the asserted expression for $C_{g,3}(x)$.
\end{proof}

For later comparison to our one-cut solution, we here record the first several such polynomials on two or three backbones in low genus:
\begin{eqnarray*}
P_{0,2}(x)&=&x,\\
P_{1,2}(x)&=&x^3(20x+21),\\
P_{2,2}(x)&=&x^5\, \left(1696x^2+6096x+1485 \right),\\
P_{3,2}(x)&=&x^7\, \left(330560x^3+2614896x^2+1954116x+225225 \right),\\\\
P_{0,3}(x)&=&2x^2(3+4x),\\
P_{1,3}(x)&=&12 x^4 (45 + 207 x + 68 x^2),\\
P_{2,3}(x)&=&6 x^6 (15015 + 137934 x + 197646 x^2 + 27592 x^3),\\
P_{3,3}(x)&=&8 x^8 (3132675 + 46335375 x + 143262162 x^2 \\
&&+ 98362965 x^3 + 
   7468348 x^4),\\
\end{eqnarray*}


\section{The potential $V(s,t,x)$}\label{sec:mm}

If $C$ is a disjoint union of chord diagrams, then let 
$$\aligned
b(C)~&{\rm denote~its~number~of~backbones},\\ 
n(C)~&{\rm denote~its~number~of~chords},\\ 
r(C)~&{\rm denote~its~number~of~boundary~cycles},\\ 
Aut(C)~&{\rm denote~its~automorphism~group}\\
&{\rm permuting~oriented~backbones},~{\rm and}\\ 
g(C)~&{\rm denote~the~sum~of~genera~of~its~components}.\\
\endaligned$$
Furthermore, set $\mu_N=2^{N/2} \pi^{N^2/2}$ and for any tuple $v_1,v_2,\ldots ,v_K$ of non-negative integers,
define
$$P_{v_1,\ldots ,v_K}(s,t,N)={1\over{\mu_N}}~{1\over{\prod_k v_k!}}\int e^{-{\rm tr}~H^2/2}~\prod_k
\biggl ( s~{\rm tr}~(tH)^k \biggr )^{v_k}~DH$$
for the integral as before over the $N \times N$ Hermitian matrices with respect to the Haar measure $DH$ given in (\ref{DH}).

\begin{lemma} 
For any tuple $v_1,\ldots ,v_K$, parameters $s,t$ and natural number $N$, we have
$$P_{v_1,\ldots ,v_K}(s,t,N)=\sum {{N^{r(C)} ~s^{b(C)}~ t^{2n(C)}}\over{\# Aut(C)}},$$
where the sum is over all isomorphism classes of disjoint unions of chord diagrams with $v_k$ backbones
having $k$ incident half-chords, for $k=1,\ldots ,K$.
\end{lemma}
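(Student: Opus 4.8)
The plan is to identify the integral with a normalized Gaussian expectation, apply Wick's theorem to expand it as a sum over pairings, read off each pairing as a chord diagram whose number of boundary cycles records the power of $N$, and finally reconcile the prefactor $1/\prod_k v_k!$ with the automorphism weights by an orbit-counting argument. First I would record that $\mu_N=\int e^{-{\rm tr}\,H^2/2}\,DH$ for the measure (\ref{DH}), so that dividing by $\mu_N$ turns the integral into the normalized Gaussian average $\langle\,\cdot\,\rangle$. Since ${\rm tr}\,(tH)^k=t^k\,{\rm tr}\,H^k$, each factor contributes $s^{v_k}t^{kv_k}$, and because $\sum_k v_k$ and $\sum_k kv_k$ are fixed by the tuple $(v_1,\ldots,v_K)$ and equal respectively the number of backbones $b$ and twice the number of chords $2n$ of every diagram occurring in the sum, these prefactors pull out as a global $s^{b}t^{2n}$. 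It therefore remains to prove the core identity
$$\frac{1}{\prod_k v_k!}\Big\langle \prod_k ({\rm tr}\,H^k)^{v_k}\Big\rangle=\sum_{[C]}\frac{N^{r(C)}}{\# Aut(C)}.$$

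Next I would expand each ${\rm tr}\,H^k=\sum H_{a_1a_2}H_{a_2a_3}\cdots H_{a_ka_1}$ and invoke Wick's theorem with the 't Hooft propagator $\langle H_{ij}H_{kl}\rangle=\delta_{il}\delta_{jk}$, which I would verify directly from (\ref{DH}). Treating the $v_k$ copies of each ${\rm tr}\,H^k$ as temporarily labeled backbones, Wick's theorem writes the expectation as a sum over perfect matchings of the $\sum_k kv_k$ half-chord slots; such a matching is exactly a chord diagram on the labeled backbones, each matched pair being a chord. The product of propagators forces a chain of index identifications along the ribbon (double-line) structure, and summing the free indices over $\{1,\ldots,N\}$ yields $N^{(\#\,{\rm index\ loops})}$. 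The standard fatgraph computation identifies the number of closed index loops with the number $r$ of boundary cycles of the associated diagram, so the labeled expectation equals $\sum_P N^{r(P)}$. (If $\sum_k kv_k$ is odd, both sides vanish, so we may assume it equals $2n$.)

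Finally I would pass from labeled matchings to isomorphism classes. The group $G=\prod_k S_{v_k}$ permuting like backbones acts on the set of labeled matchings; since relabeling backbones does not alter the underlying ribbon graph, $r(P)$ is constant on $G$-orbits, and these orbits are precisely the isomorphism classes $[C]$. By orbit-stabilizer the orbit of $C$ has size $\prod_k v_k!/\#Aut(C)$, whence $\sum_P N^{r(P)}=\sum_{[C]}(\prod_k v_k!/\#Aut(C))\,N^{r(C)}$; dividing by $\prod_k v_k!$ gives the core identity, and restoring the factor $s^{b}t^{2n}$ yields the Lemma. I expect the main obstacle to lie in this last step, specifically in checking that the stabilizer of a labeled matching is exactly $Aut(C)$: because backbones are oriented intervals, any isomorphism must carry the $i$-th half-chord endpoint of one backbone to the $i$-th of its image, so no reshuffling occurs within a backbone and every automorphism arises from a permutation of isomorphic backbones (and hence of isomorphic connected components, covering the disconnected case). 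This is also where the cyclicity of the trace must be treated with care: as a matching of labeled slots it coincides with a matching on a linearly ordered backbone and defines the same ribbon vertex, so the trace's cyclic invariance induces no internal automorphism — in particular $\#Aut(C)=1$ for a single backbone — and all symmetry is carried by the backbone permutations in $G$.
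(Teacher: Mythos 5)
Your proof is correct and is essentially the paper's own argument written out in full: the paper handles the core identity at $s=t=1$ by citing Theorem 2.1 of Penner's 1988 matrix-model paper, remarking that replacing $\tfrac{1}{k}\,\mathrm{tr}\,H^k$ there by $\mathrm{tr}\,H^k$ here kills the cyclic symmetry at each vertex (i.e.\ produces linearly ordered backbones rather than fatgraph vertices — precisely the stabilizer/cyclicity point you verify by hand), and then reduces general $(s,t)$ by pulling out $s^{b}t^{2n}$ via $b(C)=\sum_k v_k$ and $2n(C)=\sum_k k v_k$, exactly as in your first paragraph. Your Wick-expansion and orbit--stabilizer argument simply supplies the details that the paper delegates to that citation.
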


\begin{proof} The special case $s=t=1$ is entirely analogous to that of Theorem 2.1 of \cite{Penner88},
where the replacement of ${1\over k}~{{\rm tr}~H^k}$ there by ${{\rm tr}~H^k}$ here corresponds to replacing
fatgraphs there by chord diagrams here, i.e., distinguishing one sector at each vertex of the fatgraph
to represent the location of the backbone in the chord diagram kills the cyclic permutations about the vertices.
The general case then follows easily since $b(C)=\sum_k v_k$ and $2n(C)=\sum_k k v_k$.
\end{proof}

A standard computation as on p.\ 49 of \cite{Penner88} thus gives

\begin{theorem}   
For $Z(s,t,N)={1\over{\mu_N}} \int e^{-N~{\rm tr}~V(s,t,H)}~DH$ with potential
\be
V(s,t,H)={1\over 2} H^2 - {{stH}\over{1-tH}}={1\over 2} H^2-s \sum_{k\geq 1} (tH)^k,     \label{Vmatrix}
\ee
we have
$${log}~Z(s,t,N)=\sum {{N^{2-2g(C)} ~s^{b(C)}~ t^{2n(C)}}\over{\# Aut(C)}},$$
where the sum is over all (connected) chord diagrams $C$.
\end{theorem}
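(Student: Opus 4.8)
The plan is to reduce the statement to the preceding Lemma and then apply the exponential (linked-cluster) formula to pass from all chord diagrams to connected ones. First I would expand $e^{-N\,\mathrm{tr}\,V(s,t,H)}=e^{-\frac{N}{2}\mathrm{tr}\,H^2}\exp\!\big(Ns\sum_{k\geq 1}\mathrm{tr}\,(tH)^k\big)$ and write the second factor as $\prod_{k}\sum_{v_k\geq 0}\frac{1}{v_k!}\big(Ns\,\mathrm{tr}\,(tH)^k\big)^{v_k}$. Multiplying out, $Z(s,t,N)$ becomes a sum over valence profiles $(v_1,\dots,v_K)$ of Gaussian integrals of exactly the type entering $P_{v_1,\dots,v_K}(s,t,N)$, the only discrepancy being the powers of $N$ carried by the $N$-scaled potential. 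The Lemma evaluates each such integral as a sum over disjoint unions of chord diagrams of the given profile, weighted by $N^{r(C)}s^{b(C)}t^{2n(C)}/\# Aut(C)$.

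The heart of the computation is the bookkeeping of the powers of $N$, which is exactly where the $N$-scaling of the potential does its work. I would substitute $H\mapsto \tilde H/\sqrt N$, under which $\frac{N}{2}\mathrm{tr}\,H^2\mapsto\frac12\mathrm{tr}\,\tilde H^2$, the Haar measure $DH$ contributes a Jacobian $N^{-N^2/2}$, and each valence-$k$ vertex becomes $sN^{1-k/2}t^k\,\mathrm{tr}\,\tilde H^k$. Along a fixed profile the product of the vertex $N$-factors is
\be
\prod_k \big(N^{1-k/2}\big)^{v_k}=N^{\sum_k v_k(1-k/2)}=N^{\,b-n},
\ee
since $\sum_k v_k=b$ and $\sum_k k\,v_k=2n$ are constant on the profile and the half-integer powers cancel in aggregate. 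Combining this factor $N^{b-n}$ with the factor $N^{r(C)}$ furnished by the Lemma and using the Euler-characteristic relation $b-n=2-2g-r$ recorded in the Introduction, every diagram acquires precisely the weight $N^{(b-n)+r}s^{b}t^{2n}=N^{2-2g}s^{b}t^{2n}$. The leftover Jacobian $N^{-N^2/2}$ is independent of $s$ and $t$; it is the contribution of the reference Gaussian and is what is absorbed into the constant term of (\ref{eq:freeg}).

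Summing over all profiles reassembles the full Gaussian integral of $\exp\!\big(s\sum_k N^{1-k/2}t^k\,\mathrm{tr}\,\tilde H^k\big)$, which by the above equals the sum over \emph{all} disjoint unions of chord diagrams, each union weighted multiplicatively over its connected components. Since every chord diagram factors uniquely into connected components and Wick's theorem already accounts for disconnected as well as connected contractions, this full sum is the exponential of the sum over connected diagrams, provided the weights $1/\# Aut(C)$ combine with the correct overcounting of repeated isomorphic components; taking logarithms then restricts the sum to connected $C$ and yields the claimed identity (up to the aforementioned Gaussian constant). I expect the main points demanding care to be precisely this $N$-power accounting — getting $N^{r}$ to turn into $N^{2-2g}$ via Euler's relation while correctly isolating the $s,t$-independent normalization — together with the symmetry-factor bookkeeping in the linked-cluster step; both are routine but must be carried out consistently, and this is exactly the standard computation of \cite{Penner88} that the statement invokes.
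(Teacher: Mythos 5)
Your proposal is correct and follows essentially the same route as the paper, which simply invokes ``a standard computation as on p.~49 of \cite{Penner88}'' applied to the preceding Lemma: your rescaling $H\mapsto \tilde H/\sqrt N$, the $N^{b-n}$ vertex bookkeeping combined with the Lemma's $N^{r}$ and Euler's relation $b-n+r=2-2g$, and the linked-cluster passage to connected diagrams are exactly that standard computation, spelled out. Your explicit handling of the leftover Gaussian normalization $N^{-N^2/2}$ is in fact slightly more careful than the paper's own statement, which glosses over this $s,t$-independent constant.
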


An amusing point, which however does not help with computations, is that the derivative of the logarithmic potential
from \cite{Penner88}
$${d\over{dx}}\biggl ( x+ log(1-x) \biggr )=1+{1\over{x-1}}= {x\over{x-1}}$$
agrees with our potential here up to the Gaussian factor for $s=t=1$.  Let us also emphasize that the
partition function in this section enumerates chord diagrams with unlabeled backbones in contrast
to the considerations in section \ref{sec:b=1}.  (Vertices from insertions are labeled
whereas vertices from the potential are not in Wick's Theorem.)


\section{Matrix models and spectral curves: generalities}    \label{sec:matrix}

As we sketched in the introduction, the main idea of this paper is to solve the problem of counting chord diagrams (or RNA complexes or cells in Riemann's moduli space) using matrix model techniques and the topological recursion. We present this general formalism
in this and the next section. We will apply it in section \ref{sec:solution} to find the free energy expansion of the one-cut Hermitian matrix model with the potential given in (\ref{Vmatrix}). 

Let us consider a general Hermitian matrix model 
\be
Z = \int DH~e^{-\frac{1}{\hbar}{\rm tr}V(H)} = ~{\rm exp}~{\sum_{g=0}^{\infty} \hbar^{2g-2} F_g},     \label{Zmatrix}
\ee
with the measure $DH$ introduced in (\ref{DH}). This model is slightly more general than (\ref{Zintro}) as we replaced the explicit size of matrices $N$ in the exponents by $1/\hbar$. In this section, we present how to find a spectral curve of such a model in the large $N$ limit, with the so-called 't Hooft coupling 
\be
T = \hbar N = const   \label{tHooft}
\ee
kept fixed.
In the next section, we will explain how, from the knowledge of the spectral curve, one can use the topological recursion to determine free energies $F_g$. For a matrix model such as (\ref{Zmatrix}), the spectral curve and in consequence the $F_g$ depend on $T$. To solve
the combinatorial enumeration problem in (\ref{Zintro}), we can simply set $T=1$.

The spectral curve arises from the leading order loop equation of a matrix model, and it can be characterized as a curve on which the resolvent is well-defined. The (all-order) resolvent is defined as a correlator
\be
\omega_1(x) = \hbar \langle \textrm{Tr}\frac{1}{x-M}  \rangle = \sum_{g=0}^{\infty} \hbar^{2g} \omega^{(g)}_1(x)  .   \label{omega-def}
\ee
The leading order term in this expansion $\omega^{(0)}_1(x)$ is often referred to simply as the {\it resolvent}.  It is simply related to the density of eigenvalues $\rho(x)$ which becomes continuous in large $N$ limit and has the following asymptotics for large $x$
\be
\omega^{(0)}_1(x) = \frac{1}{t_0} \int \frac{\rho(x')}{x-x'} d x' \ \underset{x\to\infty}{\sim} \  \frac{1}{x},   \label{omega-infty}
\ee
where
\be
t_0 = \int \rho(x) dx.   \label{t0}
\ee
In this large $N$ limit, the eigenvalues are distributed along compact intervals called cuts, along which in particular the above integrals are performed. Moreover, an inverse relation determines the density of eigenvalues as a discontinuity 
\be
\rho(x) = \frac{1}{2\pi i} \Big(\omega^{(0)}_1(x-i\epsilon) - \omega^{(0)}_1(x+i\epsilon) \Big)
\ee
in the resolvent along these cuts.
The consistency of the above relations requires that
\be
\omega^{(0)}_1(x-i\epsilon) + \omega^{(0)}_1(x+i\epsilon) = \frac{V'(x)}{T}.     \label{omega-V}
\ee

The conditions (\ref{omega-infty}) and (\ref{omega-V}) determine the resolvent $\omega^{(0)}_1(x)$ completely, and it is often possible to guess its form by requiring that these conditions are met. Equivalently, one can find the resolvent using the so-called Migdal formula. In case there is a single cut with endpoints denoted $a$ and $b$, this formula reads
\be
\omega^{(0)}_1(x) = \frac{1}{2T} \oint_{\mathcal{C}} \frac{dz}{2\pi i} \frac{V'(z)}{x-z} \frac{\sqrt{(x-a)(x-b)}}{\sqrt{(z-a)(z-b)}},  \label{omega-Migdal}
\ee
where the contour $\mathcal{C}$ encircles the cut. Moreover, expanding the resulting expression around infinity and imposing the condition (\ref{omega-infty}) provides a relation between 't Hooft coupling $T$ and the cut endpoints. This one-cut resolvent will be sufficient for our computations in this paper, however, it is not hard to generalize the above formula to the case with many cuts having endpoints $a_i$ and $b_i$; in this case, one only needs to replace the radicals in numerator and denominator of the integrand respectively by $\prod_i(x-a_i)(x-b_i)$ and $\prod_i(z-a_i)(z-b_i)$.

It is also convenient to encode the information about the resolvent $\omega^{(0)}(x)$ in a new variable
\be
y = \frac{1}{2}V'(x) - T\omega(x) = \frac{T}{2} \big( \omega^{(0)}_1(x-i\epsilon) - \omega^{(0)}_1(x+i\epsilon) \big),  \label{yVomega}
\ee
which turns out to be related to $x$ by a polynomial equation
\be
A(x,y) = 0.
\ee
This equation therefore defines an algebraic curve, which is called a spectral curve of a matrix model. As we will review below in the context of the topological recursion, one in fact needs to provide a parametric form of this equation, i.e., to represent it in terms of two functions $x=x(p)$ and $y=y(p)$ of a parameter $p$ living on the Riemann surface such that the equation $A(x(p),y(p))=0$ is satisfied identically. For the one-cut solution, the functions $x(p)$ and $y(p)$ are rational, while equation (\ref{yVomega}) takes the form
\be
y = M(x) \sqrt{(x-a)(x-b)},   \label{yMsqrt}
\ee
where $M(x)$ is a rational function.

\begin{example}   \label{example-Gauss}
Let us consider the Gaussian model with the potential $V(x)=\frac{1}{2} x^2$. The Migdal formula (\ref{omega-Migdal}) results in 
\bea
\omega^{(0)}_1(x) & = & \frac{1}{2T}\Big(x - \sqrt{(x-a)(x-b)}   \Big) =  \\
& = & \frac{a+b}{4T} + \frac{(a-b)^2}{16Tx} + \mathcal{O}\big( x^{-2}\big),
\eea
and from the first line, it is clear that (\ref{omega-V}) is satisfied. The expansion in the second line together with the condition (\ref{omega-infty}) then imply  that 
\be
a=-2\sqrt{T}, \qquad \qquad b=2\sqrt{T}.  \label{abGaussian}
\ee
Also, introducing $y$ as in (\ref{yVomega}), we find the following equation for the spectral curve
\be
y^2 = \frac{1}{4}(x-a)(x-b).   \label{spectralGaussian}
\ee
Comparing with (\ref{yMsqrt}), we see that $M(x)=\frac{1}{2}$ in this case, and a parametric form of this algebraic curve can be given for example as follows
\be
\left\{\begin{array}{l}  x(p) = \frac{a+b}{2} + \frac{a-b}{4} \Big( p + \frac{1}{p} \Big)  = -\sqrt{T} \Big( p + \frac{1}{p} \Big),  \\
y(p) =   \frac{a-b}{8} \Big( \frac{1}{p} - p \Big) = \frac{\sqrt{T}}{2} \Big( p - \frac{1}{p} \Big).    \end{array} \right.         \label{paramGauss}
\ee
\end{example}


\section{Topological recursion}    \label{sec:recursion}

The topological recursion introduced in \cite{ChekhovEynard05,ChEO06,EO07} can be used to assign to an algebraic curve $A(x,y)=0$ a series of multi-linear differentials $W^{(g)}_n(p_1,\ldots,p_n)$ and functions $F_g$ called free energies. Free energies $F_g$ are symplectic invariants, i.e., they are invariant with respect to transformations which preserve the two-form $dx\wedge dy$ up to a sign. The multi-differentials transform in an appropriate way under symplectic transformations, and they depend on a particular parametrization $x=x(p), y=y(p)$ of the algebraic curve. The differentials $W^{(g)}_n(p_1,\ldots,p_n)$ are defined recursively by equations which, by definition, coincide with loop equations of a Hermitian matrix model \cite{ACKM}; as shown in \cite{Eynard04}, all details of these equations depend only on the form of its underlying algebraic curve $A(x,y)=0$. Thus, applying the topological recursion to the curve which is a spectral curve of some particular Hermitian matrix model such as (\ref{Zmatrix}), reproduces free energies $F_g$ and correlators $W^{(g)}_n (p_1,\ldots,p_n)$. From the matrix model viewpoint, these correlators are defined as 
$$
\Big{\langle} \textrm{tr} \Big( \frac{1}{x(p_1) - M} \Big) \cdots \textrm{tr} \Big( \frac{1}{x(p_k) - M} \Big) \Big{\rangle}_{{\rm conn}}  = \sum_{g=0}^{\infty} \hbar^{2g-2+n} \frac{W^{(g)}_n (p_1,\ldots,p_n)}{dx(p_1) \ldots dx(p_n)},
$$
which generalizes the resolvent introduced in (\ref{omega-def}), and in particular, $W^{(g)}_1(x) = \omega^{(g)}_1(x) dx$. For $\hbar=N^{-1}$, the functional part of $W^{(g)}_n$ also agrees with $\omega_n^{(g)}$ introduced in (\ref{omega-gn}).

The recursion relations for multilinear differentials $W^{(g)}_n$ are defined in terms of the following quantities. Firstly, these relations are expressed in terms of residues around the branch points
\be
dx(p^*_i)=0 \,.
\label{branchpoint}
\ee
For a point $p$ in the neighborhood of a branch point $p^*_i$ there is a unique
conjugate point $\bar{p}$, defined as
\be
x(p) \; = \; x(\bar{p}) \,.
\label{conjugate}
\ee

Two other important ingredients are the differential 1-form
\be
\omega(p) =  \big( y(\bar{p}) - y(p) \big) dx(p)       \label{def-omega} 
\ee
and the Bergman kernel $B(p,q)$.
The Bergman kernel $B(p,q)$ is defined as the unique meromorphic differential two-form with exactly one pole,
which is a double pole at $p=q$ with no residue, and with vanishing integral over $A_I$-cycles $\oint_{A_I} B(p,q)=0$
(in a canonical basis of cycles $(A_I,B^I)$).
For curves of genus zero as arise from the one-cut solution, the Bergman kernel takes the form
\be
B(p,q) \; = \; \frac{dp\,dq}{(p-q)^2} \,.
\label{Bergman}
\ee
A closely related quantity is a 1-form
$$
dE_q(p) \; = \; \frac{1}{2} \int_q^{\bar{q}} B(\xi,p) \,.
$$
defined in a neighborhood of a branch point $q^*_i$ in terms of which we may express the recursion kernel
\be
K(q,p) \; = \; \frac{dE_q(p)}{\omega(q)} \,.
\label{kernel}
\ee

With the above ingredients, we can now present the topological recursion.
This recursion determines higher-degree meromorphic differentials $W^{(g)}_n (p_1, \ldots, p_n)$ from those of lower degree.
The initial data for the recursion are the following one- and two-point correlators of genus zero
\be
W^{(0)}_1(p) = 0, \qquad \qquad 
W^{(0)}_2(p_1,p_2) = B(p_1,p_2),
\ee
with $W^{(g)}_n=0$, for $g<0$, by definition. For two sets of indices $J$ and $N=\{1,\ldots,n\}$, let us now respectively denote $\vec{p}_J = \{p_i  \}_{i\in J}$, and $\vec{p}_N=\{p_1,\ldots,p_n\}$. The multi-linear correlators are then defined recursively as
\be
\aligned
W^{(g)}_{n+1}(p,\vec{p}_N)
= & \sum_{q^*_i}\underset{q\to q^*_i}{\textrm{Res}}  K(q,p)  \Big(  W^{(g-1)}_{n+2}(q,\bar{q},\vec{p}_N)  \,  + \\
& + \sum_{m=0}^g\sum_{J\subset N} W^{(m)}_{|J|+1}(q,\vec{p}_J) W^{(g-m)}_{n-|J|+1}(\bar{q},\vec{p}_{N/J}  \Big),  \label{top-recursion}
\endaligned
\ee
where $\sum_{J\subset N}$ denotes a sum over all subsets $J$ of $N$. In particular, the lowest order correlators determined by the recursion are
$$
W^1_1(p) =  \sum_{q^*_i}\underset{q\to q^*_i}{\textrm{Res}}  K(q,p) W^0_2(q,\bar{q}) 
$$
and then \cite{W03-1,W03-2}
$$
W^0_3(p,p_1,p_2) = \sum_{q^*_i}\underset{q\to q^*_i}{\textrm{Res}}  K(q,p) \Big( W^0_2(q,p_1) W^0_2(\bar{q},p_2) + W^0_2(\bar{q}, p_1) W^0_2(q,p_2)  \Big) .
$$

Finally, we can define the  free energies $F_g$ in genus $g$. For $g\geq 2$, they come from the corresponding $W^{(g)}_1$:
\be
F_g \; = \; \frac{1}{2g-2} \sum_{q^*_i} \underset{q\to q^*_i}{\textrm{Res}} \phi (q) W^{(g)}_1(q) ,     \label{Fg}
\ee
where $\phi (q)=\int^q y(p) dx(p)$. The expressions $F_0$ and $F_1$, given in (\ref{F0def}) and (\ref{F1def}), can be found independently as discussed, e.g., in \cite{chekhov2004,EO07}. 

\begin{example}   \label{example-Guass-toprec}
Let us apply the formalism of the topological recursion to Example \ref{example-Gauss}, i.e., the algebraic curve $y^2=\frac{1}{4}(x-a)(x-b)$ with a parametrization given in (\ref{paramGauss}). We find two branch points $p^*=\pm 1$ and a global expression for the conjugate point (valid around both of these branch points) $\bar{p}=p^{-1}$. As the Gaussian curve represents the one-cut solution on the genus zero spectral curve, the relevant Bergman kernel is simply given by (\ref{Bergman}), and the recursion kernel (\ref{kernel}) takes the form
\be
K(q,p) = \frac{8q^3}{(a-b)^2(q^2-1) (p-q) (q p-1)}.  \label{kernelGaussian}
\ee
Computing the correlators, we find for instance
\be
\aligned
W^{(1)}_1(p)  \, = \, & - \frac{16p^3\, dp}{(a-b)^2 (p^2-1)^4},   \\
W^{(0)}_3(p_1,p_2,p_3)  \, = \, & \frac{8dp_1 dp_2 dp_3}{(a-b)^2}\big( (p_1 + 1)^{-2} (p_2+1)^{-2} (p_3+1)^{-2}  \\
&   - (p_1 - 1)^{-2} (p_2-1)^{-2} (p_3-1)^{-2} \big),  \\
W^{(2)}_1 \, = \, &  - \frac{86016 dp}{(a-b)^6} \frac{p^7+3 p^9+p^{11}}{(p^2 - 1)^{10}} .     \label{Wgn-Gaussian}
\endaligned
\ee
Finally, the genus-two free energy following from the definition (\ref{Fg}) reads
\be
F_2 = -\frac{16}{15(a-b)^4}.
\ee
Substituting $a,b=\mp 2\sqrt{T}$ according to (\ref{abGaussian}) we find
\be
F_2 = -\frac{1}{240 T^2} .  \label{F2Gauss}
\ee
Computing higher order free energies, we find \cite{abmodel} well-known expressions for the Euler characteristics \cite{HZ,Penner88} of moduli spaces of closed Riemann surfaces of genus $g$ generalizing the above result for genus two, namely,
\be
F_g =  \frac{B_{2g}}{2g(2g-2)} \frac{1}{T^{2g-2}}.   \label{FgGauss}
\ee
\end{example}


\section{Solution of the model}    \label{sec:solution}

In this section, we derive a formal large $N$ expansion and find free energies $F_g$ of the matrix model (\ref{Zmatrix})
\be
Z = \int DH~e^{-\frac{1}{\hbar}{\rm tr}V(H)} = ~{\rm exp } ~{\sum_{g=0}^{\infty} \hbar^{2g-2} F_g},    
\ee
with the potential given in (\ref{Vmatrix}), i.e., $V(H)={1\over 2} H^2 - {{stH}\over{1-tH}}$.
The potential in this model is a function of parameter $s$ which generates the number of backbones and $t$ which generates the number of chords. Thus,  the $F_g$ are also functions of these two parameters. Expanding $F_g$ as a series in $s$ and $t$ we obtain multiplicities labeled by the genus, number of backbones and number of chords, which provide the solution of our model. In fact, free energies depend also on 't Hooft coupling $T=\hbar N$, however, this dependence is rather simple, and all the information we seek can be obtained by setting $T=1$. To solve this matrix model, we follow the steps explained in previous sections: we find a parametrization of the spectral curve and then apply the formalism of the topological recursion to derive associated multi-linear correlators $W^{(g)}_n(p_1,\ldots,p_n)$ and free energies $F_g$. Finally, we present explicit expressions for free energies for small values of $g$ and show that they indeed encode the $c_{g,b}(n)$ discussed in the introduction, i.e., the numbers of chord diagrams for various small genera and numbers of backbones.

\subsection{Spectral curve}

We begin by determining a specific spectral curve. Let us note that the potential (\ref{Vmatrix}) is in fact not bounded below, and there is a singularity as $x\to t^{-1}$. Nonetheless, this singularity does not pose a problem in the formal expansion of the free energy as we will see. Also for this reason, we treat the parameters $s$ and $t$ in the potential (\ref{Vmatrix}) as generating parameters for powers of $x$, and so we effectively consider the behavior of eigenvalues only in the vicinity of the perturbed Gaussian minimum even though there are three critical points of the potential, i.e., we are studying the one-cut solution of the model. We can therefore determine the one-cut semiclassical resolvent $\omega^{(0)}_1(x)$ using the Migdal integral (\ref{omega-Migdal}) leading to
\be
\aligned
\omega^{(0)}_1(x) & \, = \,  \frac{x (x t -1)^2 - s t}{2T (x t -1)^2} +  \frac{\sqrt{(x-a)(x-b)}}{2T}  \times   \\
&  \times \Big( \frac{s t^2 \big(4 +(xt^2-3t)(a+b)-2xt+2ab t^2  \big)}{2(xt-1)^2 \big((at-1)(bt-1) \big)^{3/2}} - 1 \Big),  \label{omegaFull} 
\endaligned
\ee
where $a$ and $b$ denote cut endpoints. Expanding this expression for large $x$ we find that it has a form $\omega^{(0)}_1(x)=c_1 + c_2 x^{-1} + \mathcal{O}(x^{-2})$. However, imposing the asymptotics given in (\ref{omega-infty}) requires that $c_1=0$ and $c_2=1$. These two equations take the following form
\be
\left\{\begin{array}{l} 0 = a + b + \frac{st(at+bt-2)}{\big((at-1)(bt-1) \big)^{3/2}},      \\
16T = (a-b)^2 + \frac{4s \big((2-\frac{(a+b)t}{2})(at+bt-2) + 2abt^2 - 3t(a+b) + 4  \big)}{\big((at-1)(bt-1) \big)^{3/2}}.   \end{array} \right.   \label{ab-equations}
\ee
In principle, one could now solve (\ref{ab-equations}) for $a$ and $b$ and substitute these values back into (\ref{omegaFull}) to get the exact formula for the resolvent, but we know of no closed-form solution.  However, one can determine perturbative expansion of $a$ and $b$ in $s$ and $t$,  and such a representation is sufficient at least as far as perturbative expansion of free energies is concerned. For example, we find that the perturbative expansions in $s$ (with exact dependence on $t$) start as
\be
\left\{\begin{array}{l}  a = a(s,t) = -2 T^{1/2} + s t \frac{(1 - 4t^2 T)^{-1/2}}{1 + 2tT^{1/2}}  + \ldots ,\\
b = b(s,t) = 2 T^{1/2} 
+ s t \frac{(1 - 4t^2 T)^{-1/2}}{1 - 2tT^{1/2}} 
+ \ldots   \end{array} \right.   \label{ab-pert}
\ee
and an analogous expansion in $t$ (with exact dependence on $s$) can be determined.

It is convenient to  introduce half the sum and difference of the cut endpoints $a,b$
\be
S = \frac{a+b}{2},\qquad\quad D = \frac{a-b}{2}.   \label{abSD}
\ee
Taking the square of the first equation in (\ref{ab-equations}) as well as determining the denominator $\big((at-1)(bt-1) \big)^{-3/2}$ from both equations in (\ref{ab-equations}) leads to the following pair of equations
\be
\left\{\begin{array}{l}    4s^2 t^4 (3D^2 - 4T)^6= D^2 (D^2 - 4T)^2 \big(4D^2 - t^2(3D^2 - 4T)^2  \big)^3,     \\
S = \frac{D^2 - 4 T}{t (3 D^2 - 4 T)}. \end{array} \right.   \label{SD-equations}
\ee
The first equation is a polynomial equation for $D$ which can be solved perturbatively to any accuracy in $s$ or $t$, and then the second equation immediately determines $S$. In appendix \ref{app}, we present a perturbative expansion of $S$ and $D$ in $s$ with exact dependence on $t$; from this one can easily find higher order terms in the expansion (\ref{ab-pert}). 

\begin{theorem} \label{thm:conv}
Provided the 't Hooft parameter $T\neq 0$, there is a unique continuous extension of the Gaussian to a solution of (\ref{SD-equations}).  In fact, the extension is smooth in $t$
and converges for all $s$ and small $t$, and it furthermore satisfies $a(s,-t)=-b(s,t)$.
\end{theorem}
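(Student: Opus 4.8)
The plan is to read the first equation of (\ref{SD-equations}) as an implicit definition of $D^2$ as a function of $(s,t)$, the second equation then producing $S$ explicitly; the Gaussian endpoints (\ref{abGaussian}) correspond, in the variables (\ref{abSD}), to $D^2=4T,\ S=0$ along the locus $st=0$. The obstacle is that a naive application of the implicit function theorem fails: putting $t=0$ in the first equation of (\ref{SD-equations}) gives $64\,D^8(D^2-4T)^2=0$, so the Gaussian value $D^2=4T$ is a \emph{double} root and the derivative of that equation in $D^2$ vanishes there. The way around this, already foreshadowed by (\ref{ab-pert}), is that $D^2-4T=O(t^2)$; I would therefore substitute $D^2=4T+t^2\zeta$ and cancel an overall $t^4$, turning the first equation of (\ref{SD-equations}) into the polynomial relation
\[
(4T+t^2\zeta)\,\zeta^2\big(16T+4t^2\zeta-t^2(8T+3t^2\zeta)^2\big)^3=4s^2(8T+3t^2\zeta)^6,
\]
say $\Phi(\zeta,s,t)=0$. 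Two features are decisive: $\Phi$ is a polynomial in $t^2$ (hence even in $t$) and depends on $s$ only through $s^2$; and at $t=0$ it collapses to $16384\,T^4\zeta^2=4s^2(8T)^6$, whose roots $\zeta=\pm 8sT$ are now \emph{simple} whenever $s\ne0$. It is precisely $T\ne0$ that makes this work, since it guarantees $3D^2-4T=8T\ne0$ and $4D^2=16T\ne0$ at the Gaussian point.

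With the degeneracy resolved, I would invoke the analytic implicit function theorem applied to $\Phi$. Since $\partial_\zeta\Phi|_{t=0}=32768\,T^4\zeta\ne0$ at each root $\zeta=\pm8sT$ with $s\ne0$, there is for every such $s$ a unique $\zeta(s,t)$ analytic in $t$ near $t=0$ attaining the chosen value at $t=0$; for $s=0$ the potential (\ref{Vmatrix}) is exactly Gaussian and $\zeta\equiv0$. This gives at once existence, smoothness in $t$ (analyticity implies $C^\infty$), and convergence of the $t$-expansion for small $t$ and every $s$, with $D=-\sqrt{4T+t^2\zeta}$ and $S=t\zeta/(8T+3t^2\zeta)$ both analytic near $t=0$ because $4T\ne0$ and $8T\ne0$. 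The symmetry is then immediate: uniqueness of the implicit-function solution together with the evenness of $\Phi$ in $t$ forces $\zeta(s,-t)=\zeta(s,t)$, so $D^2$ and $D$ are even in $t$ while $S(s,-t)=-S(s,t)$ is odd; since $a=S+D$ and $b=S-D$, one reads off $a(s,-t)=-S(s,t)+D(s,t)=-b(s,t)$.

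The genuinely delicate point is \emph{uniqueness}, because the derivation of (\ref{SD-equations}) from (\ref{ab-equations}) involved squaring and so doubled the solution set. As $\Phi$ depends on $s^2$, the two branches $\zeta_\pm(s,t)=\pm8sT+O(t^2)$ are exchanged by $s\mapsto-s$, and both are continuous extensions of the Gaussian, since both give $D^2\to4T$ as $t\to0$. To isolate the correct one I would go back to the \emph{unsquared} first equation of (\ref{ab-equations}): expressed in the variables (\ref{abSD}) it equates $((at-1)(bt-1))^{3/2}$, taken with its principal (positive) sign near the Gaussian value $1$, to $2sD^2t^2/(D^2-4T)$, which therefore must be nonnegative. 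For $T>0$ this pins the sign of $D^2-4T$ to that of $s$, selecting $\zeta_+$ for all $s$; equivalently, only $\zeta_+$ reproduces the linear-in-$s$ term of (\ref{ab-pert}), while $\zeta_-$ solves the squared system but not (\ref{ab-equations}). Hence among continuous extensions of the Gaussian actually satisfying (\ref{ab-equations}) there is exactly one.

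I expect the main obstacle to be the resolution of the double root, namely recognizing that the correct local scaling is $D^2-4T\sim t^2$ and that the rescaled polynomial $\Phi$ is both even in $t$ and has simple roots, after which the implicit function theorem does the real work; the remaining care goes into the branch bookkeeping needed to recover uniqueness from the squared equations (\ref{SD-equations}).
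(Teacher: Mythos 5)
Your proposal is correct, and it is worth recording that it does more than the paper's own argument: it repairs it. The paper also applies the Implicit Function Theorem to the polynomial $q_{s,t}$ built from the first equation of (\ref{SD-equations}), but it does so directly at the Gaussian point, quoting the derivative $\left.\partial q_{s,t}/\partial D\right|_{t=0}=512\,D^{15}(8T-3D^4)(4T-D^4)$ and asserting it is nonzero there. That computation cannot be right: at $t=0$ the equation reduces to $64\,D^{8}(D^{2}-4T)^{2}=0$, whose $D$-derivative is
\[
256\,D^{7}(D^{2}-4T)(3D^{2}-8T),
\]
and this vanishes identically at the Gaussian value $D^{2}=4T$ --- precisely the double-root degeneracy you identified. (The paper's printed formula is what one obtains after erroneously replacing $D$ by $D^{2}$ in the $t=0$ equation, and its evaluation ``$D^{4}=16T$'' is likewise inconsistent, since $D^{2}=4T$ gives $D^{4}=16T^{2}$.) Your rescaling $D^{2}=4T+t^{2}\zeta$, which blows up the double root into the two simple roots $\zeta=\pm 8sT$ and only then invokes the analytic Implicit Function Theorem, is therefore not a stylistic variant but the step needed to make the argument valid. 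You also go beyond the paper on uniqueness: because (\ref{SD-equations}) arises by squaring, both branches $\zeta_{\pm}$ are continuous extensions of the Gaussian solving (\ref{SD-equations}), so uniqueness can only hold relative to the unsquared system (\ref{ab-equations}); your sign argument pinning $\mathrm{sign}(D^{2}-4T)=\mathrm{sign}(s)$ (equivalently, matching the linear-in-$s$ term of (\ref{ab-pert})) supplies the missing selection rule, which the paper never addresses. The evenness argument yielding $a(s,-t)=-b(s,t)$ is essentially the same in both treatments. Two minor caveats in your write-up: the branch selection as stated uses $T>0$ (harmless, as the intended regime is $T=1$), and at $s=0$ one should note explicitly that $\zeta\equiv 0$ is the only root of $\Phi$ that stays bounded as $t\to 0$, the other factors forcing $\zeta\sim -4T/t^{2}$; both are easily supplied.
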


\begin{proof}
Consider the polynomial $q_{s,t}(S,D)$ given by the difference of the left- from right-hand side of the first equation (\ref{SD-equations}).  Direct computation shows that
$ \left.\frac{\partial q_{s,t}(S,D)}{\partial D}\right|_{t=0}=512~D^{15}(8T-3D^4)(4T-D^4)$.  Provided this is non-vanishing, the Implicit Function Theorem applies to provide the unique smooth and convergent one-cut solution.  Meanwhile for the Gaussian with $t=0$, we have cut endpoints $\pm 2\sqrt{T}$, so $S=0$ and $D=2\sqrt{T}$.  Thus, $T\neq 0$ implies $D\neq 0$, and $D^4=16T$ further implies that the factors $(8t-3D^4)=-8T$ and $(4T-D^4)=-12T$ are also non-zero as required.  

The asserted symmetry $a(s,t)=-b(s,-t)$ is equivalent to the
statement that $S$ is an odd and $D$ an even function of $t$.  Meanwhile, the first equation in (\ref{SD-equations}) involves only even powers of $t$ so the 
smooth solution from the Implicit Function Theorem must be an even function of $t$.  Insofar as $T$ is independent of $t$, it is even as well, so the second equation
in (\ref{SD-equations}) displays $S$ as an odd function of $t$.
\end{proof}

In what follows, we will work with general expressions for the resolvent and ultimately derive exact expressions for $F_g$ in terms of $S$ and $D$. Substituting the above expansions, we will then find perturbative expansions of $F_g$ in $s$ and $t$. Let us therefore proceed to find the spectral curve in terms of parameters $a$ and $b$ or equivalently $S$ and $D$. This curve is to be expressed in terms of variables $x$ and $y$ defined as in (\ref{yVomega}). In particular, only expressions involving square roots $\sqrt{(x-a)(x-b)}$, given in the second line of (\ref{omegaFull}), will enter the formula for $y$. Moreover, note that the  second line of (\ref{omegaFull}) involves one non-rational expression $\big((at - 1)(bt - 1)\big)^{3/2}$, which however can be replaced by a rational function using the first equation in (\ref{ab-equations}). In consequence, the equation for the spectral curve takes form
\be
y = \frac{1}{2}\sqrt{(x-a)(x-b)}\frac{\big(tx - 1 + \frac{(a+b)t}{4}\big)^2 + \gamma}{(tx-1)^2},      \label{spectral-xy}
\ee
so that comparing with (\ref{yMsqrt}) we find
\be
M(x) = \frac{\big(tx - 1 + \frac{(a+b)t}{4}\big)^2 + \gamma}{2 (tx-1)^2}.   \label{M-new}
\ee
The entire dependence in (\ref{spectral-xy}) on $x$ and $y$ is given explicitly, and $\gamma$ is a rational function of $a$, $b$ and $t$ given below. If we square this equation we obtain an algebraic equation for the spectral curve. In fact, notice that all factors of $t$ in this equation multiply $a$ and $b$, and if we rescale also $x$ and $y$ by $t$, then the entire dependence on $t$ can be encoded in rescaled variables
\be
\xi= x t, \quad \eta = y t,\quad \alpha=a t, \quad \beta=b t, \quad \sigma=S t, \quad \delta = Dt.  \label{rescale-t}
\ee

In particular in terms of variables introduced above, the expression for $\gamma$ takes the form
\bea
\gamma & = & - \frac{(\alpha + \beta) \big( \alpha^2 + \beta^2 + 14(\alpha+\beta-\alpha\beta) -16 \big)}{16 (\alpha + \beta - 2)} \nonumber \\
& = &  \frac{\sigma (4 - 4 \delta^2 - 7 \sigma + 3 \sigma^2)}{4 (\sigma-1)}  .
\eea
Finally in these rescaled variables, the remaining equation reads
\be
(\xi - 1)^4 \eta^2 = \frac{1}{4} (\xi - \alpha)(\xi-\beta) \Big[ \big(\xi-1+\frac{\alpha+\beta}{4})^2 + \gamma \Big]^2.  \label{SpectralCurve}
\ee

This is our desired expression for the spectral curve.  In order to apply the topological recursion, we must further find a parametric form of this equation. 
To this end, note that the square root of the above equation, or equivalently the equation (\ref{spectral-xy}), is a modification of the Gaussian curve $y=\frac{1}{2}\sqrt{(x-a)(x-b)}$ given in (\ref{spectralGaussian}) by a rational function in $x$. The essence of the parametrization of the Gaussian curve  given in (\ref{paramGauss}) was to take care of this square root and devise a function $x=x(p)$ such that $\frac{1}{4}(x-a)(x-b)$ is a complete square of a rational function in $p$, which is then identified with $y$. We can therefore borrow this aspect of the Gaussian parametrization (\ref{paramGauss}) and just modify the form of $y$ by multiplying it by this additional rational function of $x$ in the present case. This leads to the following parametric form of the spectral curve
\be
\left\{\begin{array}{l}  \xi(p) = \frac{\alpha+\beta}{2} + \frac{\alpha-\beta}{4} \Big( p + \frac{1}{p} \Big) = \sigma + \frac{\delta}{2} \Big( p + \frac{1}{p} \Big),    \\
\eta(p) =   \frac{\alpha-\beta}{8} \Big(\frac{1}{p} - p \Big)  \frac{\big(\xi-1+\frac{\alpha+\beta}{4}\big)^2 + \gamma}{(\xi-1)^2}
= \frac{\delta}{4} \Big(\frac{1}{p} - p \Big)  \frac{\big(\xi-1+\frac{\sigma}{2}\big)^2 + \gamma}{(\xi-1)^2}.   \end{array} \right.          \label{parametrize}
\ee
With this parametrization of the spectral curve, we are prepared to compute the free energies using the topological recursion.

\subsection{Free energies}

We can now determine free energies $F_g$, for genus $g\geq 2$, of the spectral curve given in (\ref{SpectralCurve}) as parametrized by (\ref{parametrize}). This curve has genus zero, so the Bergman kernel takes the form (\ref{Bergman}). As the parametrization of the $\xi$ variable is the same as in the Gaussian case (\ref{paramGauss}), we find the same branch points $p=\pm 1$ and globally conjugate points $\bar{p}=p^{-1}$ as given in Example \ref{example-Gauss}. Nonetheless, all other quantities which depend also on $\eta$ have of course a much more complicated form than in the Gaussian case. In particular, the recursion kernel takes the form
\be
\aligned
K(q,p) = & \frac{2q^3 }{(q^2 - 1) (p-q) (q p-1) \delta^2  }  \, 
   \frac{\big((q^{-1} + q) \delta + 
   2 \sigma - 2\big)^2}{4\gamma +  \big((q^{-1} + q) \delta + 
   3 \sigma - 2\big)^2 }    . 
\endaligned   \label{kernel-new}
\ee
With these ingredients, we can compute multi-linear correlators $W^{(g)}_n$. They are given by rather complicated and not very enlightening formulas, which are too lengthy to write down here. However, just to present the simplest example, let us give the formula for $W^{(0)}_3$:
\be
\aligned
W^{(0)}_3(p_1,p_2,p_3) = & \, 2dp_1 dp_2 dp_3 \, \frac{\sigma - 1}{\sigma^2}  \, \times \\
& \Big( \frac{\sigma-\delta-1}{(p_1 + 1)^{2} (p_2+1)^{2} (p_3+1)^{2} 
(1+\delta-4 \sigma+3 \sigma^2) }  \\
&  - \frac{\sigma+\delta-1}{(p_1 - 1)^{2} (p_2-1)^{2} (p_3-1)^{2}  (1-\delta-4 \sigma+3 \sigma^2) }\Big)  .
\endaligned   \label{W03-new}
\ee

At this stage viz.\ Theorem \ref{thm:conv}, we can also confirm that all our results are consistent with the Gaussian model  discussed in Examples \ref{example-Gauss} and \ref{example-Guass-toprec}. The potential (\ref{Vmatrix}) reduces to the Gaussian potential when $st=0$. In both cases
$s=0$ and $t=0$, the system (\ref{ab-equations}) simplifies so that the first equation implies $a=-b$, and then the second equation leads to the result (\ref{abGaussian}) also in agreement with the expansion (\ref{ab-pert}). For $a=-b$, we also find that $\gamma=0$, and the curve (\ref{SpectralCurve}) reduces to the Gaussian curve (\ref{spectralGaussian}). Furthermore, the recursion kernel (\ref{kernel-new}) and all correlators and
(\ref{W03-new}) in particular reduce respectively to (\ref{kernelGaussian}) and (\ref{Wgn-Gaussian}).

Finally with Mathematica \cite{mathematica}, we compute genus two and genus three free energies. As these are the free energies of the curve (\ref{parametrize}), which involves variables $x$ and $y$ rescaled by $t$, let us denote them by $\widetilde{F}_2$ and $\widetilde{F}_3$. We find they are given by the following exact expressions
\be
\aligned
\widetilde{F}_2 & = -\frac{(1-\sigma)^2}{240 \delta^4 (1 - \delta - 4 \sigma + 3 \sigma^2)^5 (1 + \delta - 4 \sigma + 3 \sigma^2)^5} \times \nonumber \\
& \times \Big(160 \delta^4 (1 - 3 \sigma)^4 (1 - \sigma)^6 
- 80 \delta^2 (1 - 3 \sigma)^6 (1 - \sigma)^8   \nonumber\\
& + 16 (1 - 3 \sigma)^8 (1 - \sigma)^{10} + \delta^{10} (-16 + 219 \sigma - 462 \sigma^2 + 252 \sigma^3)   \nonumber \\
& + 10 \delta^6 (1 - 3 \sigma)^2 (1 - \sigma)^4 (-16 - 126 \sigma - 423 \sigma^2 + 2286 \sigma^3 - 
     2862 \sigma^4 + 1134 \sigma^5) \nonumber \\
& + 5 \delta^8 (1 - \sigma)^2 (16 + 189 \sigma - 2970 \sigma^2 + 9549 \sigma^3 - 11286 \sigma^4 + 4536 \sigma^5)\Big)  \nonumber
\endaligned
\ee
\be
\aligned
\widetilde{F}_3 & = \frac{(1-\sigma)^4}{2016 \delta^8 (1 - \delta - 4 \sigma + 3 \sigma^2)^{10} (1 + \delta - 4 \sigma + 3 \sigma^2)^{10}} \times \nonumber \\
& \times \Big(107520 \delta^8 (1 - 3 \sigma)^8 (1 - \sigma)^{12} - 
 61440 \delta^6 (1 - 3 \sigma)^{10} (1 - \sigma)^{14}  \nonumber \\
& + 23040 \delta^4 (1 - 3 \sigma)^{12} (1 - \sigma)^{16} - 
 5120 \delta^2 (1 - 3 \sigma)^{14} (1 - \sigma)^{18} + 
 512 (1 - 3 \sigma)^{16} (-1 + \sigma)^{20} \nonumber \\
 & + \delta^{20} (512 - 
    48603 \sigma + 471708 \sigma^2 - 1645251 \sigma^3 + 
    2623320 \sigma^4 - 1957284 \sigma^5 + 
    555660 \sigma^6)   \nonumber \\
& + \delta^{18} (1 - \sigma)^2 (-5120 - 
    639867 \sigma + 20457090 \sigma^2 - 177935466 \sigma^3 + 
    697802346 \sigma^4 \nonumber \\
&   - 1439492067 \sigma^5 + 
    1623708828 \sigma^6 - 948940272 \sigma^7 + 
    225042300 \sigma^8) \nonumber \\
& + 252 \delta^{10} (1 - 3 \sigma)^6 (1 - \sigma)^{10} (-512 + 
    2970 \sigma + 70785 \sigma^2 - 3780 \sigma^3 - 
    1030050 \sigma^4 \nonumber \\
& + 2252880 \sigma^5 - 1077300 \sigma^6 - 
    1682100 \sigma^7 + 2417850 \sigma^8 - 1129950 \sigma^9 + 
    178605 \sigma^{10}) \nonumber \\
& + 84 \delta^{12} (1 - 3 \sigma)^4 (1 - \sigma)^8 (1280 - 
    17523 \sigma + 256203 \sigma^2 + 5113206 \sigma^3 - 
    22143348 \sigma^4 + 2780703 \sigma^5 \nonumber \\
& + 124222086 \sigma^6 - 
    273630879 \sigma^7 + 265968360 \sigma^8 - 
    126660105 \sigma^9 + 24111675 \sigma^{10}) \nonumber \\
& + 9 \delta^{16} (1 - \sigma)^4 (2560 + 156387 \sigma + 
    605388 \sigma^2 - 68079487 \sigma^3 + 696887576 \sigma^4 - 
    3203644311 \sigma^5 \nonumber \\
& + 8142105636 \sigma^6 - 
    12207659661 \sigma^7 + 10771521264 \sigma^8 - 
    5182076088 \sigma^9 + 1050197400 \sigma^{10}) \nonumber \\
& + 3 \delta^{14} (1 - 3 \sigma)^2 (1 - \sigma)^6 (-20480 - 
    6699 \sigma - 23380056 \sigma^2 + 166860351 \sigma^3 \nonumber \\
& + 520822512 \sigma^4 - 6709872897 \sigma^5 + 22111131168 \sigma^6 - 36421137675 \sigma^7 \nonumber \\
& + 33054947856 \sigma^8 - 15849836352 \sigma^9 + 
    3150592200 \sigma^{10})   \Big)  \nonumber
\endaligned
\ee
The free energies $F_g$ in our model are closely related to these results. Indeed, the original curve is written in terms of $x$ and $y$ while the above free energies are computed for the curve in rescaled variables (\ref{rescale-t}). In general, rescaling a variable $x$ or $y$ by $t$ results in rescaling of the genus $g$ free energy by $t^{2g-2}$. Because we rescale both $x$ and $y$, this means that the free energies are rescaled by $t^{2(2g-2)}$. The genus two and three free energies, with dependence on $t$ reintroduced, are thus given as
\be
F_2 = t^4 \widetilde{F}_2, \qquad \quad  F_3 = t^8 \widetilde{F}_3.  \label{F2F3}
\ee
Note that the entire dependence on $t$ arises upon substituting $\sigma$ and $\delta$ according to (\ref{rescale-t}). 

The expressions (\ref{F2F3}) are our desired results, and all we need do is substitute the expressions for $\sigma$ and $\delta$ or in fact $a$ and $b$ given in principle by solving the equations (\ref{ab-equations}). In practice however, what we can do is to substitute the series expansion of $a$ and $b$ in terms of $s$ and $t$ given in (\ref{ab-pert}) or to better accuracy in appendix \ref{app}. This finally leads to the ultimate expansion of $F_2$ and $F_3$ in  $s$ and $t$ which we seek. Let us present results after setting 
\be
T=1 \ \textrm{and} \ z=t^2.   \label{T1zt2}
\ee
We can consider the perturbative expansion of $F_g$ in $s$ with exact dependence on $t$ or the perturbative expansion in $t$ with exact dependence on $s$. In the former case, we find
\be
\aligned
F_2(s,t) & \, = \,  -\frac{1}{240} + \sum_{b\geq 1} \frac{s^b}{b!} C_{2,b}(t^2) ,  \\
F_3(s,t) & \, = \,  \frac{1}{1008} + \sum_{b\geq 1} \frac{s^b}{b!} C_{3,b}(t^2),
\endaligned
\ee
with the constant terms $-1/240$ and $1/ 1008$ reproducing the free energies of the Gaussian model (\ref{FgGauss}) in accordance with the fact that our model reduces to the Gaussian one for $st=0$. 

We find that indeed $C_{2,1}(z)\equiv C_2(z)$ and $C_{3,1}(z)\equiv C_3(z)$ reproduce $C_g(z)$ given in (\ref{Cbb1}), and $C_{2,b}(z)$ and $C_{3,b}(z)$, for $b=2,3$, reproduce generating functions given in (\ref{Cg23}). This confirms that our model correctly reproduces these known results for $b=1,2,3$ backbones in genera two and three. Moreover, we can easily perform expansions to higher order in $s$ revealing generating functions for more backbones. For example for four backbones, we find
\be
\aligned
C_{2,4}(z) & \, = \, \frac{144 z^7}{(1 - 4 z)^{13}} (38675 + 620648 z + 2087808 z^2 \nonumber \\
    & + 1569328 z^3 +  134208 z^4)  , \nonumber  \\
C_{3,4}(z) & \, = \,  \frac{48z^9}{ (1 - 4 z)^{16}} (53416125 + 1194366915 z + 6557325096 z^2 \nonumber \\
    & + 10738411392 z^3 + 4580024832 z^4 + 236239616 z^5),   \nonumber
\endaligned
\ee
and for five backbones, we find
\be
\aligned
C_{2,5}(z) & \, = \, \frac{144 z^8}{(1 - 4 z)^{31/2}} (2543625 + 62424520 z + 375044396 z^2 \nonumber \\
    & + 671666053 z^3 + 314761848 z^4 + 18335696 z^5)  , \nonumber  \\
C_{3,5}(z) & \, = \,  \frac{720z^{10}}{ (1 - 4 z)^{37/2}} (360380790 + 11275076865 z + 95744892585 z^2  \nonumber \\
    & + 282797424880 z^3 + 291167707410 z^4 + 85497242928 z^5 + 3218434848 z^6).  \nonumber
\endaligned
\ee

We can also determine the perturbative expansion in $t$ with exact dependence on $s$. Again scaling variables as in (\ref{T1zt2}) and now expanding up to seven chords, i.e., to order $z^7$, we find
\be
\aligned
F_2& \,  = \,  -\frac{1}{240} + 
 21 s z^4 + (483 s + 1485 \frac{s^2}{2}) z^5 + (6468 s + 26808 s^2 + 
    15015 s^3) z^6 \nonumber \\
      & + (66066 s + 526064 s^2 + 768564 s^3 + 232050 s^4) z^7 + \mathcal{O}(z^8) ,    \nonumber \\
F_3& \,  = \, \frac{1}{1008} + 1485 s z^6 + (56628 s + 225225 \frac{s^2}{2}) z^7 + \mathcal{O}(z^8) ,    \nonumber
\endaligned
\ee
These expansions provide the numbers of configurations with a fixed number of chords and arbitrary numbers of backbones in accordance with Proposition \ref{prop:fin}.  For example, for genus two with seven chords, there are at most four backbones, and for genus three with seven chords, there are at most two backbones. We can easily continue the above expansions to higher powers of $s$ and $t$.



\appendix

\section{Genera zero and one free energies}  \label{preapp}

In this appendix, we determine free energies for genus zero and one, i.e., compute $F_0$ and $F_1$ in general in notation introduced in section \ref{sec:matrix}, and ultimately substitute to derive results for our model in section \ref{sec:solution}.

\subsection{Genus zero} 

The genus zero free energy for the matrix model given in (\ref{Zmatrix}) can be written in a saddle point approximation \cite{BIPZ,Marino05} as
\bea
F_0 & = & -T\int dx \rho(x) V(x) + T^2 \int dx dy \rho(x)\rho(y)\log|x-y| + \nonumber \\
& & + T\lambda \Big(\int \rho(x)dx - t_0  \Big) \nonumber\\ 
& = & -\frac{T}{2} \Big(\int dx \rho(x) V(x) - \lambda t_0 \Big).       \label{F0app}
\eea
In this variational problem, it is crucial to impose the condition preserving the number of eigenvalues, which is fixed by (\ref{t0}). This can be done by introducing and subsequently determining the Lagrange multiplier $\lambda$, where the rest of the notation in the above equation is the same as in section \ref{sec:matrix}. The Lagrange multiplier can be determined from the asymptotic condition on $y(x)$ which (for the one-cut solution) leads to the formula
\be
\lambda = \lim_{\Lambda} \Big( 2 \int_b^{\Lambda} y(x)dx \, -V(\Lambda) - t_0\log \Lambda \Big),
\ee
which is finite as singular terms from the upper integration limit must be canceled by $V(\Lambda)$ and $t_0\log \Lambda$ in the limit of large $\Lambda$. Substituting the results found in section \ref{sec:solution}, we find
\be
\lambda = \frac{t D^2  + S \big( 3 t^2 D^2 + 2 tS (5 - 3  t S) - 4\big) + 
 2t D^2  (3 t S - 1) \log(-D/2)}{8t (t S - 1)},   \label{lambdaOur}
\ee
and $t_0$ defined in (\ref{t0}) takes form
\be
t_0 = \frac{(a - b)^2 \big(3 (a + b) t - 2\big)}{16 T\big( (a + b) t - 2\big)} = \frac{D^2 (3 t S -1)}{4 T (t S - 1)}.  \label{t0Our}
\ee

Moreover, we find the following integral
\be
\int dx \rho(x) V(x)  =   -\frac{1}{64 t^3 T (t S  - 1)}\Big(f_1 + \frac{f_2}{\sqrt{(tS+tD-1)(tS-tD-1)}}  \Big),   \label{rhoV}
\ee
where
\be
\aligned
& f_1  =  16 S - 16 (s + 3 S^2) t + 8 S (-3 D^2 + 2 s + 6 S^2) t^2 +  \\
& \qquad + (D^4 - 16 S^4 + 4 D^2 (2 s + 7 S^2)) t^3 + 3 D^2 S (D^2 - 4 (2 s + S^2)) t^4  \\
& f_2 =  8 (-2 S + 2 (s + 4 S^2) t - 4 S (-D^2 + s + 3 S^2) t^2 +  \\ 
& \qquad +    2 (S^2-D^2) (s + 4 S^2) t^3 +  S ( D^2 (3 s + 4 S^2) -2 D^4 - 2 S^4) t^4).  \nonumber
\endaligned
\ee
Substituting now (\ref{lambdaOur}), (\ref{t0Our}) and (\ref{rhoV}) into (\ref{F0app}), we find $F_0$ expressed in terms of cut endpoints $a$ and $b$ or rather their combinations $S$ and $D$ defined in (\ref{abSD}). We can now substitute expansions of $S$ and $D$ obtained from solving (\ref{ab-equations}) or (\ref{SD-equations}). In particular, substituting perturbative expansions of $S$ and $D$ in $s$ with exact dependence on $t$ as given in appendix \ref{app}, setting $T=1$ and $z=t^2$ as in (\ref{T1zt2}), and neglecting $z$-independent contributions which reproduce the genus zero result in the Gaussian case, we find perfect agreement with known results. 

Indeed, $C_{0,1}(z)\equiv C_0(z)$ reproduces the generating function for the Catalan numbers given in (\ref{C0-Catalan}) while $C_{0,2}(z)$ and $C_{0,3}(z)$ agree with (\ref{Cg23}). Moreover, we can perform expansions in $s$ to arbitrary order and find exact generating functions for arbitrary numbers of backbones. In particular for four and five backbones, we find
\bea
C_{0,4}(z) & = & \frac{24 z^3 (3 + 18 z + 8 z^2)}{(1 - 4 z)^{7}} = \nonumber \\
& = & 72 z^3 + 2448 z^4 + 44544 z^5 + 585984 z^6 + \ldots    \label{C04}  \\
C_{0,5}(z) & = & \frac{24 z^4 (55 + 741 z + 1512 z^2 + 336 z^3)}{(1 - 4 z)^{19/2}} = \nonumber \\
& = & 1320 z^4 + 67944 z^5 + 1765440 z^6 + \ldots     \label{C05}
\eea
To test these results and to provide an idea of their combinatorial complexity,  we show in appendix \ref{app-count} by an explicit enumeration that  in genus zero on four backbones there are indeed 72 diagrams with three chords and 2448 diagrams with four chords.


\subsection{Genus one} As shown in \cite{chekhov2004}, the genus one free energy is given by (\ref{F1def}),  where $M(x)$ is in general defined via (\ref{yMsqrt}) and in our model takes the form (\ref{M-new}), where $a$ and $b$ are given by a solution to the system (\ref{ab-equations}). We can expand this result to arbitrary accuracy in $s$ or $t$ to find generating functions for chord diagrams with arbitrary numbers of backbones or chords. 
The constant term is logarithmic in $T$ and merely reproduces the Gaussian genus one free energy,  
Setting
 $T=1$ in accordance with (\ref{T1zt2}), expanding in $s$, keeping the exact dependence in $t$ and using results presented in appendix \ref{app}, we  find that indeed $C_{1,1}(z)\equiv C_1(z)$ agrees with $C_1(z)$ given in (\ref{Cbb1}) and $C_{1,2}(z)$, $C_{1,3}(z)$ agree with (\ref{Cg23}). This confirms that our approach correctly reproduces known results in genus one for three or fewer backbones. Moreover, performing expansions to higher order in $s$, we can immediately find generating functions for more backbones, for instance,
\bea
C_{1,4}(z) & = & \frac{24 z^5 (715 + 7551z + 12456 z^2 + 2096 z^3) }{(1 - 4 z)^{10}}, \nonumber \\
C_{1,5}(z) & = & \frac{48z^6 (13650 + 253625 z + 975915 z^2 + 840600 z^3 + 86448 z^4)}{(1 - 4 z)^{25/2}}.  \nonumber
\eea

\section{Cut endpoints} \label{app}

It is convenient to present here a solution for cut endpoints $a$ and $b$ obtained from equations (\ref{ab-equations}) in terms of variables introduced in (\ref{abSD})
\be
S = \frac{a+b}{2},\qquad\quad D = \frac{a-b}{2}  .
\ee
Equations (\ref{SD-equations}) provide a solution which is perturbative in $s$ and exact in $t$: 
\be
\aligned
D & = -2 T^{1/2} - \frac{2 s t^2 T^{1/2}}{(1 - 4 t^2 T)^{3/2}} - \frac{3 s^2 t^4 T^{1/2} (3 + 4 t^2 T)}{(1 - 4 t^2 T)^4} \nonumber \\
  & - \frac{ s^3 t^6 T^{1/2} (47 + 308 t^2 T + 128 t^4 T^2)}{(1 - 4 t^2 T)^{13/2}} \nonumber \\
  & - \frac{ s^4 t^8 T^{1/2} (1051 + 15756 t^2 T + 33168 t^4 T^2 + 6720 t^6 T^3)}{4 (1 - 4 t^2 T)^9} \nonumber \\
  & - \frac{3 s^5 t^{10} T^{1/2} (2037 + 53448 t^2 T + 262224 t^4 T^2 + 
    272640 t^6 T^3 + 32768 t^8 T^4)}{4 (1 - 4 t^2 T)^{23/2} }
    + \mathcal{O}(s^6) \nonumber \\
S & = \frac{s t}{(1 - 4 t^2 T)^{3/2}} + \frac{ 2 s^2 t^3 (1 + 8 t^2 T)}{(1 - 4 t^2 T)^4}  \nonumber \\
  & + \frac{ s^3 t^5 (7 + 166 t^2 T + 280 t^4 T^2)}{(1 - 4 t^2 T)^{13/2}} \nonumber \\
  & + \frac{2 s^4 t^7 (15 + 684 t^2 T + 3552 t^4 T^2 + 2560 t^6 T^3)}{(1 - 4 t^2 T)^9} \nonumber \\
  & + \frac{s^5 t^9 (143 + 10408 t^2 T + 107094 t^4 T^2 + 240144 t^6 T^3 + 
   96096 t^8 T^4)}{(1 - 4 t^2 T)^{23/2} }
  + \mathcal{O}(s^6) \nonumber
\endaligned
\ee

                 
\section{Explicit diagram counting}  \label{app-count}

To illustrate the combinatorial complexity of chord diagrams and fully appreciate the generating functions which we have computed, we directly count in this appendix  the number of genus zero diagrams on four backbones with three or four chords. In this way, we also confirm correctness of the coefficients
$c_{0,4}(3)=72$ and $c_{0,4}(4)=2448$ in the genus zero expansion on four backbones in (\ref{C0414intro}). 

Our first task is to distribute endpoints of chords along backbones in all possible inequivalent ways.  If there are $n$ chords,
then there are $2n$ such endpoints, perhaps better apprehended as half-chords.  We shall denote a partition of
$\{1,2,\cdots 2n\}$ into $n_i$ sets of cardinality $i\geq 1$ as $\{ n_1,\ldots ,n_K\}$ where $\sum _{k=1}^K n_k=2n$.

We must then count how many ways there are to produce a connected chord diagram of the correct genus by pairing up these
putative
endpoints.  In fact in the cases under consideration, the genus is always zero by Proposition \ref{prop:fin}, so here we must  count
simply the number of connected such pairings.

\bigskip

To begin, consider diagrams on four backbones  with three chords.  The partitions of six
are:  $\{3,1,1,1\}$ and $\{2,2,1,1\}$. 
\begin{itemize}
\item For $\{3,1,1,1\}$,
all possible linearly ordered sets of backbones arise by embedding them in the real line with the induced linear
order and orientations.  
To achieve such an embedding, we merely determine which among the four
backbones has three half-chords.
Order these half-chords 1, 2, 3, which 
connect uniquely to the other 
backbones occurring in the order 1, 2, 3
along the real axis.  There are thus 4 times 3! = 24 such chord diagrams.

\smallskip
\item For $\{2,2,1,1\}$, there are ${4 \choose 2} = 6$ places to put the two 2's.
By connectivity, there must be a chord between the two 2's,
and there are four ways to do this.  The two remaining half-chords on the 2's
again come in an order along the real axis as do the two remaining
1's, giving another factor 2!, for a total of $6\cdot 4\cdot 2!=48$.

\end{itemize}
The grand total is thus $24+48=72$ diagrams in agreement with the first term in the expansion of the generating function $C_{0,4}$ given in  (\ref{C0414intro}).

\bigskip

We next consider the more involved case of four backbones with four chords. There are the five partitions $\{5,1,1,1\}$, $\{4,2,1,1\}$, $\{3,2,2,1\}$, $\{2,2,2,2\}$, and $\{3,3,1,1\}$ of the eight half-chords into four backbones, and:
\begin{itemize}

\item For $\{5,1,1,1\}$, there are four locations for the 5. Two half-chords on 5 must
be connected, and there are
${5 \choose2}=10$ of these.  The remaining half-chords on 5 connect
arbitrarily to the 1's giving another
factor 3!, for a total of $4\cdot 10\cdot 6=240$ diagrams for $\{5,1,1,1\}$.

\smallskip

\item For $\{4,2,1,1\}$ there are four locations for 4 and three for 2.
There must be at least one 4-2 chord by connectivity and at most two
by cardinality.

If there is a unique chord 4-2, among the $4\cdot 2$ such possibilities, then the
other half-chord
on the 2 must connect to a 1 by definition, and there are two of these;
the remaining 1 then  can connect to any of the three remaining half-chords;
there are a total of $4\cdot 3\cdot 4\cdot 2\cdot 2\cdot 3=576$ in this sub-case.

There are ${4\choose 2}=6$ times two possible pairs of 4-2 chords and two
ways to
produce a connected diagram connecting the remaining 1's to the remaining
half-chords on 4; there are  a total of $4\cdot 3 \cdot 6\cdot 2\cdot 2=288$ in this sub-case.

\smallskip\noindent There are  a total of $576+288=864$ chord diagrams for $\{4,2,1,1\}$.\smallskip

\item For $\{3,2,2,1\}$, there are four locations for 3 and three for 1. There must be at
least one 3-2 chord by connectivity.

If there is exactly one 3-2 chord among the $3\cdot 4$ possibilities, then
the other half-chord from the 2 must go to the
other 2, with two possibilities.  This gives $4\cdot 3\cdot 4\cdot 3\cdot 2 = 288$
in this sub-case.

If there are exactly two 3-2 chords, then their 2-endpoints must lie
on different backbones by connectivity.  There are thus ${3\choose 2}=3$
times eight such pairs
of chords and a unique completion to a diagram with exactly two 3-2 chords.
There are a total of $4\cdot 3\cdot 3\cdot 8=288$ in this sub-case.

There are $4\cdot 3\cdot 2$ configurations of three 3-2 chords with a unique
completion, so a total
of $4\cdot 3\cdot 4\cdot 3\cdot 2=288$ in the sub-case of exactly three 3-2 chords.

\smallskip\noindent There are a total of $288+288+288=864$ diagrams for $\{3,2,2,1\}$.\smallskip

\item For $\{2,2,2,2\}$, there can be no two chords with the same backbone endpoints by
connectivity,
so the first chord on the first backbone lands in one of six places and
the second in one of four places on different
backbones.   There are two connected completions of any such partial diagram, for 
a total of $6\cdot 4\cdot 2=48$ diagrams for $\{2,2,2,2\}$.

\smallskip

\item For $\{3,3,1,1\}$, there are ${4\choose 2}$ locations for the 3's.  A fixed 1 must
connect to a 3 in one of three possible ways
times two for the two 3's.
Either this 3 connects to the other 1 in two possible ways and to the
other 3 in three possible ways, or not, in which case, the other 1
connects to the other 3 in three  possible ways with two possible
connected completions for a total of $6\cdot 6\cdot (2\cdot 3+3\cdot 2) = 432$ diagrams for $\{3,3,1,1\}$.

\end{itemize}
Overall, we find
$
240+864+864+48+432=2448
$
chord diagrams in agreement with the second term $c_{0,4}(4)$ in the expansion of the generating function $C_{0,4}$ given in  (\ref{C0414intro}).


\bigskip

\noindent \thanks{Acknowledgements: JEA and RCP are supported by the Centre 
for Quantum Geometry of Moduli Spaces which is funded by the Danish National Research Foundation.
The research of LCh is supported by the Russian Foundation for Basic
Research (Grants Nos 10-02-01315-a, 11-01-12037-ofi-m-2011, and
11-02-90453-Ukr-f-a) by the Program Mathematical Methods of Nonlinear
Dynamics and by the Grant of Supporting Leading Scientific Schools
NSh-4612.2012.1.
The research of PS is supported by the DOE grant DE-FG03-92-ER40701FG-02,
the European Commission under the Marie-Curie International Outgoing Fellowship Programme, 
and the Foundation for Polish Science.  RCP also acknowledges the kind support of
Institut Henri Poincar\'e where parts of this manuscript were written.
}


\end{document}